\newtheorem{thm}{Theorem}
\let\@copyrightspace\relax
\begin{document}
\newcommand{\prateekccs}{\textcolor{black}}
\newcommand{\prateeknew}{\textcolor{black}}
\newcommand{\mattnew}{\textcolor{black}}


\title{Pisces: Anonymous Communication Using Social Networks}
\author{
\IEEEauthorblockN{Prateek Mittal}
\IEEEauthorblockA{University of California, Berkeley\\ pmittal@eecs.berkeley.edu}
\and
\IEEEauthorblockN{Matthew Wright}
\IEEEauthorblockA{University of Texas at Arlington\\ mwright@cse.uta.edu}
\and
\IEEEauthorblockN{Nikita Borisov}
\IEEEauthorblockA{University of Illinois at Urbana-Champaign\\ nikita@illinois.edu}
}
\maketitle

\begin{abstract}

The architectures of deployed anonymity systems such as Tor suffer from
two key problems that limit user's trust in these systems. First, paths
for anonymous communication are built without considering trust
relationships between users and relays in the system. Second, the
network architecture relies on a set of \prateeknew{centralized} 
servers. 
In this paper, we propose \emph{Pisces}, a decentralized protocol for anonymous
communications that leverages users' social links to build circuits
for onion routing. We argue that such an approach greatly improves the
system's resilience to attackers.

A fundamental challenge in this setting is the design of a secure
process to discover peers for use in a user's circuit. All existing
solutions for secure peer discovery leverage structured topologies and
cannot be applied to unstructured social network topologies. In Pisces,
we discover peers by using random walks in the social network graph with
a bias away from highly connected nodes to prevent a few nodes from
dominating the circuit creation process. To secure the random walks, we
leverage the \emph{reciprocal neighbor policy}: if malicious nodes
try to exclude honest nodes during peer discovery so as to improve the
chance of being selected, then honest nodes can use a tit-for-tat
approach and reciprocally exclude the malicious nodes from their routing
tables. We describe a fully decentralized protocol for enforcing this
policy, and use it to build the Pisces anonymity system.


Using theoretical modeling and experiments on real-world 
social network topologies, we show that (a) the reciprocal neighbor 
policy mitigates active attacks that an adversary can perform, 
(b) our decentralized protocol to enforce this policy 
is secure and has low overhead, and (c) the overall anonymity 
provided by our system significantly outperforms existing approaches.  
\end{abstract}  

\section{Introduction}
\label{sec:intro}

Systems for anonymous communication on the Internet, or {\em anonymity
  systems}, provide a technical means to enhance user privacy by hiding
the link between the user and her remote communicating parties (such as
websites that the user visits). Popular anonymity systems include
Anonymizer.com~\cite{anonymizer}, AN.ON~\cite{an.on, webmixes}, and
Tor~\cite{tor}. Tor is used by hundreds of thousands of
users~\cite{metrics-portal}, including journalists, dissidents,
whistle-blowers, law enforcement, and government
embassies~\cite{tor-embassies,who-uses-tor}.

Anonymity systems forward user traffic through a path (or {\em circuit})
of proxy servers. In some systems, including Tor, the proxies on the
circuit are selected from among a large number of available proxies,
each of which is supposed to be operated by a different person. An
attacker, however, might run a substantial fraction of the proxies under
different identities. He would then be able to deanonymize users whose
circuits run through his attacker-controlled proxies. Thus, the security
of the anonymity system hinges on at least some of the proxies in the
circuit being honest~\cite{or}. \mattnew{Having some means to discern
  which proxies are likely to be honest would thereby greatly enhance
  the security of the system.}

Recently, Johnson et al.\ proposed a method to incorporate
\mattnew{ {\em
  trust}---in which the user must judge which proxies are more likely to
be honest---into} a Tor-like
system~\cite{johnson:csf09,johnson:ccs11}. However, their approach
relies on
\prateeknew{central servers}, 
and offers only limited scalability (see Section~\ref{sec:background}).
Both Nagaraja~\cite{nagaraja:pet07} and Danezis et al.~\cite{drac}
describe a compelling vision for \mattnew{leveraging social
  relationships} in a decentralized (peer-to-peer) anonymity system by
building circuits over edges in a social network graph. Unfortunately,
their protocols are limited in applicability to a honest-but-curious
attacker model.
Against a more powerful Byzantine adversary, both approaches are
vulnerable to {\em route capture attacks}, in which the entire circuit
is comprised of
malicious nodes. To our knowledge, no proposed system securely
\mattnew{leverages social relationship information to improve the
  chances of attacker-free circuit construction} in a decentralized
  anonymity system.

\noindent{\bf Our contributions:}
\mattnew{In this paper, we propose to use social networks to help
  construct circuits that are more robust to compromise than any prior
  approach among decentralized anonymity systems.} We take advantage 
of the fact that, when protected from
manipulation, random walks on social network topologies are likely to
remain among the honest users~\cite{sybillimit,whanau:nsdi10}. 
We thus propose to construct random walks on a social network 
topology to select circuits in such a way that they cannot be 
manipulated by a Byzantine adversary. 
We then build circuits from these protected random walks and 
show that they provide a
very high chance for users to have an honest circuit, 
even for users who have a few social links to malicious peers.

The key challenge in this setting is to prevent the adversary from
biasing the random walk by manipulating their routing tables.
To this end, we propose the \emph{reciprocal neighbor policy}: if
malicious nodes try to exclude honest nodes during peer discovery, then
honest nodes can use a tit-for-tat approach and reciprocally exclude the
malicious nodes from their routing tables.
The policy ensures that attempts to bias the random walk towards 
malicious nodes reduce the probability of malicious nodes 
being selected as intermediate nodes in the random walk, 
nullifying the effect of the attack.  
Further, to prevent an attacker from benefiting by creating a large
clique of malicious peers in the social network, we bias random walks
away from peers with many friends.

An important contribution of our work is a technique for enforcing the
reciprocal neighbor policy in a fully decentralized fashion. We
efficiently distribute each node's current list of contacts (using
Whanau~\cite{whanau:nsdi10}) so that those contacts can verify
periodically that they are in the list. A contact that should be in the
list, but is not, can remove the node permanently from its contacts in
future time periods. Further, the list is signed by the node, so any
conflicting lists for the same time period constitute proof that the
node is cheating.
Using this policy, we %
design %
Pisces, a decentralized
anonymity system that uses %
random walks on social networks to
take advantage of users' trust relationships without being exposed to
circuit manipulation.

We demonstrate through %
theoretical analysis,
simulation, and experiments, that our application of the reciprocal neighbor policy provides
good deterrence against active attacks. We also show that our
distributed design provides robust enforcement of this policy, with
manageable overhead for distributing and checking contact
lists. 
Finally, using real world social network topologies, %
we show that Pisces provides significantly higher
anonymity than existing approaches.
Compared with %
decentralized approaches that do not leverage social networks
 (like %
ShadowWalker%
~\cite{shadowwalker}), Pisces provides up to six bits higher
entropy in a single communication round. Compared with the naive
strategy of using conventional random walks over social networks (as in
the Drac system~\cite{drac}), Pisces provides twice the %
entropy over 100 communication rounds.

\section{Background and Related Work}
\label{sec:background}

The focus of this work is on low-latency anonymity systems that can be
used for interactive traffic such as Web browsing and instant
messaging. Low-latency anonymity systems aim to defend against a partial
adversary who can compromise or monitor only a fraction of links in the
system. Most of these systems rely on \emph{onion
  routing}\cite{onion-routing} for anonymous communication. Onion
routing enables anonymous communication by using a sequence of relays as
intermediate nodes to forward traffic. Such a sequence of relays is
referred to as a \emph{circuit}. A key property of onion routing is that
each relay on the circuit only sees the identity of the previous hop and
the next hop, but no single relay can link both the initiator and the
destination of the communication.

\subsection{Centralized/semi-centralized approaches}

Most deployed systems for anonymous communication have a
centralized or semi-centralized architecture, including
Anonymizer~\cite{anonymizer}, AN.ON~\cite{an.on}, Tor~\cite{tor},
Freedom~\cite{freedom}, Onion Routing~\cite{onion-routing}, and
I2P~\cite{i2p}.
Anonymizer.com~\cite{anonymizer} is effectively a centralized proxy
server with a single point of control. If the proxy server becomes
compromised or is subject to subpoena, the privacy provided by the
system would be lost. AN.ON~\cite{an.on} distributes the trust among
three independently-operated servers; again, the compromise of just a
few nodes suffices to undermine the entire system. Both Anonymizer.com
and AN.ON are prone to %
flooding-based denial of service
attacks. Furthermore, with both systems, it may be possible to eavesdrop
on the server(s) and use end-to-end timing attacks (such
as~\cite{wang:ccs05, dsss, rainbow}) to substantially undermine the
privacy of all users.

Tor is a widely used anonymous communication system, serving roughly
500,000 users~\cite{metrics-users} and carrying terabytes of traffic
each day~\cite{metrics-portal}. Tor is substantially more distributed
than either Anonymizer.com or AN.ON, with users building circuits from
among about 3,000 proxy nodes (onion routers) as of May
2012~\cite{torstatus-blutmagie}. This helps to protect against direct
attacks and eavesdropping on the entire system.
Tor relies on trusted entities called {\em directory authorities} to
maintain up-to-date information about all relays that are online in the
form of a network consensus database.
Freedom~\cite{freedom} and I2P~\cite{i2p} 
also use such centralized directory servers. Users 
download the full database, and then locally select random relays for
anonymous communication. Clients download this database every three
hours to handle relay churn.

Although Tor has a more distributed approach than any other deployed
system, there are several shortcomings with its architecture. First, Tor
does not leverage a user's trust relationships for building circuits. In
Tor, an attacker could volunteer a set of proxy nodes under different
identities and then use these nodes to compromise the anonymity of
circuits going through them. Leveraging trust relationships has been
shown to be useful for improving anonymity against such an attacker in
Tor~\cite{johnson:csf09,johnson:ccs11}.
Second, the trusted directory authorities are attractive targets for
attack; in fact, some directory authorities were recently found to have
been compromised~\cite{tor-dir-auth-compromise}. Finally, the
requirement for all users to maintain global information about all
online relays becomes a scalability bottleneck. McLachlan et
al.~\cite{torsk} showed that under reasonable growth projections, the
Tor network could be spending more bandwidth to maintain this global
system view than for the core task of relaying anonymous communications.
The recent proposal for PIR-Tor~\cite{pir-tor} might address the
networking scalability issues, %
but it does not mitigate the basic trust and denial of service
issues in a centralized approach.

\subsection{Incorporating social trust}

The importance of leveraging social network trust relationships to
improve the security and privacy properties of systems has been
recognized by a large body of previous work~\cite{advogato,sybilguard,
  sybillimit,sybilinfer,srdhtr,whanau:nsdi10,x-vine,drac,nagaraja:pet07,
  johnson:csf09,johnson:ccs11,membership-concealing}. Recently, Johnson
et al.\ proposed a method to incorporate trust into a Tor-like
system~\cite{johnson:csf09,johnson:ccs11}.  However, their approach
relies on \prateeknew{central servers}, %
and offers only limited
scalability. Nagaraja~\cite{nagaraja:pet07} and Danezis et
al.~\cite{drac} have both proposed anonymity systems over social
networks.
However, both approaches assume an honest-but-curious attack model and
are vulnerable to route capture attacks. In particular, without the
security of the reciprocal neighbor policy used in Pisces, a random walk
on the social graph that goes to an attacker-controlled peer at any step
can be controlled by the attacker for the remainder of the walk.

Designing anonymity systems that are aware of users' trust relationships
is an important step towards defending against the {\em Sybil
  attack}~\cite{sybil}, in which a single entity in the network (the
attacker) can emulate the behavior of multiple identities and violate
security properties of the system. Mechanisms like
SybilGuard~\cite{sybilguard}, SybilLimit~\cite{sybillimit}, and
SybilInfer~\cite{sybilinfer} aim to leverage social network trust
relationships to bound the number of Sybil identities any malicious
entity can emulate. These mechanisms are based on the observation that
it is costly for an adversary to form trust relationships (also known as
\emph{attack edges}) with honest nodes.  When the adversary performs a
Sybil attack, he can create an arbitrary number of edges between Sybil
identities and malicious entities, but cannot create trust relationships
between Sybil identities and honest users. Thus, a Sybil attack in
social networks creates two regions in the social network graph, the
honest region and the Sybil region, with relatively few edges between
them; i.e., the graph features a {\em small cut}. This cut can be used
to detect and mitigate the Sybil attack.

Recent work has challenged the assumption that it is costly for an
attacker to create attack edges with honest nodes in friendship
graphs~\cite{wilson:eurosys09,boshmaf:acsac11,bilge:www09,irani:dimva11},
and proposed the use of interaction graphs as a more secure realization
of real world social trust. In this work, we will evaluate Pisces with
both %
friendship graphs as well as topologies based on
interaction graphs. Other mechanisms to infer the strength of ties
between users~\cite{gilbert-karahalios:chi09} may also be helpful in
creating resilient social graphs, but these are not the focus of this
paper.

\subsection{Decentralized and peer-to-peer approaches}

A number of distributed directory services for anonymous communication 
have been designed using a P2P approach; most have serious problems that 
prevent them from being deployed. We point out these issues briefly here.
First we note that the well-known Crowds system, which was the first P2P
anonymity system, uses a centralized directory service~\cite{crowds}
and thus is not fully P2P. The Tarzan system proposes a gossip-based
distributed directory service that does not scale well beyond 10,000
nodes~\cite{tarzan}. 

MorphMix was the first scalable P2P anonymity system~\cite{morphmix}. It
uses random walks on unstructured topologies for circuit construction
and employs a {\em witness} scheme that aims to detect routing table
manipulation. The detection mechanism can be bypassed by an attacker who
is careful in his choices of fake routing
tables~\cite{tabriz:pet06}. With or without the evasion technique, the
attacker can manipulate routing tables to capture a substantial fraction
of circuits. Despite over a decade of research, decentralized mechanisms
to secure random walks in unstructured topologies have been an open
problem. Pisces overcomes this problem by having peers sign their
routing tables for a given time slot and ensuring that nodes observe
enough copies to detect cheaters quickly, before many route captures can
occur, and with certainty.

Recently, several protocols have been proposed using P2P systems built
on distributed hash tables (DHTs), including AP3~\cite{ap3},
Salsa~\cite{salsa}, NISAN~\cite{nisan}, and Torsk~\cite{torsk}. These
four protocols are vulnerable to information leak
attacks~\cite{mittal:ccs08, wang:ccs10}, as the lookup process and
circuit construction techniques expose information about the requesting
peer's circuits. These attacks can lead to users being partially or
completely deanonymized. ShadowWalker, which is also based on a
structured topology, employs random walks on the DHT topology for
circuit construction~\cite{shadowwalker}. The routing tables are checked
and signed by {\em shadow nodes} such that both a node in the random
walk and all of its shadows would need to be attackers for a route
capture attack to succeed. The protocol was found to be partially
broken, but then also fixed, by Schuchard et
al.~\cite{schuchard:wpes10}. Despite it not being seriously vulnerable
to attacks as found against other protocols, ShadowWalker remains
vulnerable to a small but non-trivial fraction of route captures
(roughly the same as Tor for reasonable parameters). 
As we show in
Section~\ref{sec:evaluation}, Pisces's use of social trust means that it
can outperform ShadowWalker (and Tor) for route captures when the
attacker has a bounded number of attack edges in the social network.

Other than these approaches, Mittal et al.~\cite{mittal:hotsec10}
briefly considered the use of the reciprocal neighbor policy for
anonymous communication. However, their protocol is only applicable to
constant degree topologies, and utilizes central points of
trust. Moreover, their evaluation was very preliminary. In this paper,
we present a complete design for a decentralized anonymity system based
on the reciprocal neighbor policy. Since our design is not limited
to constant degree topologies, we explore the advantages that come from
applying the technique to unstructured social network graphs. We also
present the first fully decentralized protocols for achieving these
policies and present analysis, simulation, and experimentation results
demonstrating the security and performance properties of the Pisces
approach.

\section{Pisces Protocol}
\label{sec:protocol}

In this section, we first describe our design goals, threat model, and
system model. We then outline the core problem of securing random walks
and describe the role of the reciprocal neighbor policy in solving the
problem in the context of social networks. Finally, we explain how
Pisces %
\prateekccs{securely implements this policy.}

\subsection{Design goals}

We now present our key design goals for our system.

\noindent {\em 1.\ Trustworthy anonymity:} we target an architecture 
that is able to leverage a user's social trust relationships to improve 
the security of anonymous communication. Current mechanisms 
for incorporating social trust are either centralized or are limited 
in applicability to an honest-but-curious attacker model.

\noindent {\em 2.\ Decentralized design:} the design should not 
have any central entities. Central entities are attractive targets 
for attackers, in addition to being a single point of failure for 
the entire system. \prateekccs{The design should also mitigate route capture 
and information leak attacks.} 

\noindent {\em 3.\ Scalable anonymity:} the design should be able to
scale to millions of users and relays with low communication
overhead. Since anonymity is defined as the state of being
unidentifiable in a group~\cite{anon_terminology}, architectures that
can support millions of users provide the additional benefit of
increasing the overall anonymity of users.

\subsection{Threat model}
In this work, we consider a colluding adversary who can launch Byzantine
attacks against the anonymity system. The adversary can perform passive
attacks such as logging information for end-to-end timing
analysis~\cite{levine:fc04}, as well as active attacks such as deviating
from the protocol and selectively denying service to some
circuits~\cite{borisov:ccs07}.
We assume the existence of effective mechanisms to defend against the
Sybil attack, such as those based on social
networks~\cite{sybillimit,sybilinfer}. Existing Sybil defense mechanisms
are not perfect and allow the insertion of a bounded number of Sybil
identities in the system. They also require the number of attack edges
to be bounded by $g=O(\frac{h}{\log h})$, where $h$ is the number of
honest nodes in the system. We use this as our primary threat model. For
comparative analysis, we also evaluate our system under an ideal Sybil
defense that does not allow the insertion of any Sybil identities.

\subsection{System Model and Assumptions}

Pisces is a fully decentralized protocol and does not assume any PKI
infrastructure. Each node generates a local public-private key pair. An
identity in the system equates to its public key. Existing Sybil defense
mechanisms can be used to validate node identities. %
We assume that the identities in
the system can be blacklisted; i.e., an adversary node cannot whitewash
its identity by rejoining the system with a different public key. This
is a reasonable assumption, since (a) mechanisms such as
SybilInfer/SybilLimit only allow the insertion of a bounded number of
Sybil identities, and (b) replacing deleted attack edges is expensive
for the attacker, particularly in a social network graph based on
interactions. We assume loose time synchronization amongst
nodes. Existing services such as NTP~\cite{ntp} can provide time
synchronization on the order of hundreds of milliseconds in wide area
networks~\cite{ntpstudy}.

Finally, in this work, we will leverage mechanisms for building
efficient communication structures over unstructured social networks,
such as Whanau~\cite{whanau:nsdi10} and X-Vine~\cite{x-vine}. These
mechanisms embed a structure into social network topologies to provide a
distributed hash table for efficient communication~\cite{chord,pastry}.
In particular, we use Whanau, since it provides the best security
guarantees amongst the current state of art. Whanau guarantees that,
with high probability, it is possible to securely look up any object
present in the DHT. It is important to point out that Whanau only
provides \emph{availability}, but not
\emph{integrity}~\cite{whanau:nsdi10}. This means that if a user
performs redundant lookups for a single key, multiple conflicting
results may be returned; Whanau guarantees that the correct result will
be included in the returned set, but leaves the problem of identifying
which result is correct to the application layer. Therefore, Whanau
cannot be used in conjunction with current protocols that provide
anonymous communication using structured topologies~\cite{shadowwalker},
since these protocols require integrity guarantees from the DHT layer
itself. We emphasize that the only property we assume from Whanau is
secure routing; in particular, we do not assume any privacy or anonymity
properties in its lookup mechanisms~\cite{mittal:ccs08,wang:ccs10}.
\subsection{Problem Overview}

Random walks are an integral part of many distributed anonymity systems,
from Tor~\cite{tor} to ShadowWalker~\cite{shadowwalker}. In a random
walk \-based circuit construction, an initiator $I$ of the random walk
first selects a random node $A$ from its neighbors in some topology (in
our case, the social network graph). The initiator sets up a single-hop
circuit with node $A$ and uses the circuit to download a list of node
$A$'s neighbors (containing the IP addresses and public keys of
neighbors). Node $I$ can then select a random node $B$ from the
downloaded list of node $A$'s neighbors and extend the circuit through
$A$ onto node $B$. This process can be repeated to set up a circuit of
length $l$.

Random walks are vulnerable to active route capture attacks in which an
adversary biases the peer discovery process towards colluding malicious
nodes. First, malicious nodes can exclude honest nodes from their
neighbor list to bias the peer discovery process. Second, malicious
nodes can modify the public keys of honest nodes in their neighbor
list. When a initiator of the random walk extends a circuit from a
malicious node to a neighboring honest node, the malicious node can
simply emulate the honest neighbor. The malicious node can repeat this
process for further circuit extensions as well. Finally, the malicious
nodes can add more edges between each other in the social network
topology to increase the percentage of malicious nodes in their neighbor
lists.
To secure the random walk process, we use a reciprocal
neighbor policy that 
limits the benefit to the attacker of attempting to bias the random
walks. We propose a protocol that securely realizes this policy
through detection of violations.

\subsection{Reciprocal Neighbor Policy}
We 
now discuss the key primitive we leverage for securing random walks, the
\emph{reciprocal neighbor policy}. The main idea
of this policy is to consider undirected versions of structured or
unstructured topologies and then entangle the routing tables of
neighboring nodes with each other. In other words, if a malicious node
$X$ does not correctly advertise an honest node $Y$ in its neighbor
list, then $Y$ also excludes $X$ from its neighbor list in a tit-for-tat
manner.
The reciprocal neighbor policy ensures that route capture attacks
based on incorrect advertisement of honest nodes during random walks
serves to partially isolate malicious nodes behind a small cut in the
topology, reducing the probability that they will be selected in a
random walk. In particular, this policy mitigates the first two types of
route capture attacks described above, namely the exclusion of honest
nodes and the modification of honest nodes' public keys. However, the
adversary can still bias the peer discovery process by simply inserting
a large number of malicious nodes to its routing tables. Thus, the
reciprocal neighbor policy described so far %
would only be effective for topologies
in which node degrees are bounded and homogeneous, such as structured
peer-to-peer topologies like Chord~\cite{chord} and
Pastry~\cite{pastry}. However, node degrees in unstructured social
network topologies are highly heterogeneous, presenting an avenue for
attack.
{\bf Handling the node degree attack: } Addition of edges amongst
colluding malicious nodes in a topology increases the probability that a
malicious node is selected in a random walk. To prevent this node degree
attack, we propose to perform random walks using the Metropolis-Hastings
modification~\cite{metropolis,MetropolisHastings} --- the transition
matrix used for our random walks is as follows:

\begin{ccs}
\vspace{-10pt}
\end{ccs}
\begin{equation}
P_{ij}  = 
\begin{cases}
 \min(\frac{1}{d_i}, \frac{1}{d_j})& \text{ if }i \rightarrow j\text{ is an edge in G} \\
1-\sum_{k \neq i} P_{ik} & \text{ if } j =  i \\
0&  \text{  otherwise}
\end{cases}
\end{equation}

\noindent where $d_i$ denotes the degree of vertex $i$ in $G$. Since the
transition probabilities to neighbors may not always sum to one, nodes
add a self loop to the transition probabilities to address this. The
Metropolis-Hastings modification ensures that attempts to add malicious
nodes in the neighbor table decreases the probability of malicious nodes
being selected in a random walk.
We will show that the Metropolis-Hastings modification along with
reciprocal neighbor policy is surprisingly effective at mitigating
active attacks on random walks. A malicious node's attempts to bias the
random walk process by launching route capture attacks reduce its own
probability of getting selected as an intermediate node in future random
walks, nullifying the effect of the attack.

\subsection{Securing Reciprocal Neighbor Policy}

We now present our protocol for securely implementing the 
reciprocal neighbor policy.

{\bf Intuition:} Our key idea is to have neighbors periodically check
each other's neighbor lists. Suppose that node $X$ and node $Y$ are
neighbors. If node $X$'s neighbor list doesn't include node $Y$, then
the periodic check will reveal this and enable node $Y$ to implement the
tit-for-tat removal of node $X$ from its routing table. Additionally,
the neighbor lists can be signed by each node so that a dishonest node
can be caught with two different, signed lists and blacklisted. To
handle churn, we propose that all nodes keep their neighbor lists static
for the duration of a regular interval ($t$) and update the list with
joins and leaves only between intervals. Here we rely on our assumption
of loose time synchronization. The use of static neighbor lists ensures
that we can identify conflicting neighbor lists from a node for the same
time interval, which would be a clear indication of malicious behavior.
The duration of the time interval for which the lists remain static
determines the trade-off between the communication overhead for securing
the reciprocal neighborhood policy and the unreliability of circuit
construction due to churn.

{\bf Setting up static neighbor list certificates:} A short time prior
to the beginning of a new time interval, each node sets up a new
neighbor list that it will use in the next time interval:

\begin{enumerate*}
  \item \emph{Liveness check}: In the first round, nodes exchange
    messages with their trusted neighbors to check for liveness and
    reciprocal trust. A reciprocal exchange of messages ensures that
    both neighbors are interested in advertising each other in the next
    time interval (and are not in each other's local blacklists). Nodes
    wait for a time duration to receive these messages from all
    neighbors, and after the timeout, construct a preliminary version of
    their next neighbor list, comprising node identities of all nodes
    that responded in the first communication round.

  \item \emph{Degree exchange}: Next, the nodes broadcast the length of
    their preliminary neighbor list to all the neighbors. This step is
    important since Metropolis-Hastings random walks require node
    degrees of neighboring nodes to determine their transition
    probabilities.

  \item \emph{Final list}: After receiving these broadcasts from all the
    neighbors, a node creates a final neighbor list and digitally signs
    it with its private key. The final list includes the IP address,
    public key, and node degree of each neighbor, as well as the time
    interval for the validity of the list.  Note that a neighbor may go
    offline between the first and second step, before the node has a
    chance to learn its node degree, in which case it can simply be
    omitted from the final list.

  \item {\em Local integrity checks}: At the beginning of every new time
    interval, each node queries all its neighbors and downloads their
    signed neighbor lists.  When a node $A$ receives a neighbor list
    from $B$, it performs local integrity checks, verifying that $B$'s
    neighbor entry for $A$ contains the correct IP address, public key,
    and node degree. Additionally, it verifies that the length of the
    neighbor list is at most as long as was broadcast in Phase 2. (Note
    that intentionally broadcasting a higher node degree is
    disadvantageous to a $B$, as it will reduce the transition
    probability of it being chosen by a random walk). If any local
    integrity checks fails, $A$ places $B$ in its permanent local
    blacklist, severing its social trust relationship with $B$ and
    refusing all further communication. If all the checks succeed, then
    these neighbor lists serve as a cryptographic commitment from these
    nodes--the presence of any conflicting neighbor lists for the same
    time interval issued by the same node is clear evidence of
    misbehavior. 
    
    If $B$'s neighbor list omits $A$ entirely, or if $B$ simply refuses
    to send its neighbor list to $A$, $B$ is placed on a temporary
    blacklist, and $A$ will refuse further communication with $B$ for
    the duration of the current time period, preventing any circuits
    from being extended from $A$ to $B$. (Effectively, $A$ performs a
    selective denial-of-service against $B$; see
    Section~\ref{sec:seldos} for more discussion of this.)  The
    blacklist only lasts for the duration of the current round, since
    the omission could have resulted from a temporary communication
    failure.
\end{enumerate*}

{\bf Duplicate detection:} Next, we need to ensure that $B$ uses the
same neighbor list during random walks as it presented to its neighbors.
Our approach is to use the Whanau DHT to check for the presence of
several conflicting neighbor lists signed by the same node for the same
time period. After performing the local checks, $A$ will store a copy of
$B$'s signed neighbor list in the Whanau, using $B$'s identity (namely,
its public key) as the DHT key.  Then, when another node $C$ performs a
random walk that passes through $B$, it will receive a signed neighbor
list from $B$. It will then perform a lookup in the DHT for any stored
neighbor lists under $B$'s key. If it discovers a different list for the
same period with a valid signature, then it can notify $B$'s neighbors
about the misbehavior, causing them to immediately blacklist $B$.

One challenge is that the Whanau lookups are not anonymous and may
reveal to external observers the fact that $C$ is performing a random
walk through $B$. This information leak, linking $C$ and $B$, can then
be used to break $C$'s anonymity~\cite{mittal:ccs08,wang:ccs10}. To
address this problem, we introduce the concept of \emph{testing} random
walks that are not actually used for anonymous communication but are
otherwise indistinguishable from regular random walks. Whanau lookups to
check for misbehavior are performed for
testing random walks only, since information leaks in that case will not
reveal private information. The lookups are performed after the random
walk to ensure that testing walks and the regular walks cannot be
distinguished.
If each node performs a small number of testing walks within a each time
period, any misbehavior will be detected with high probability.

{\bf Blacklisting:} When $C$ detects a conflicting neighbor list issued
by $B$, it immediately notifies all of $B$'s neighbors (as listed in the
neighbor list stored in the DHT), presenting the two lists as evidence
of misbehavior. $B$'s neighbors will thereafter terminate their social
relationships with $B$, blacklisting it. Note, however, that the two
conflicting lists form incontrovertible evidence that $B$ was behaving
maliciously, since honest nodes \emph{never} issue two neighbor lists in
a single time interval. This evidence can be broadcast globally to
ensure that \emph{all} nodes blacklist $B$, as any node can verify the
signatures on the two lists, and thus $B$ will not be able to form
connections with any honest nodes in the system.  Moreover, honest nodes
will know not to select $B$ in any random walk, effectively removing it
from the social graph entirely.

{\bf Proactive vs.\ reactive security:} Our system relies on detecting
malicious behavior and blacklisting nodes. Thus, as described so far,
Pisces provides reactive security. To further strengthen random walk
security in the scenario when the adversary is performing route capture
for the first time, we propose an extension to Pisces that aims to
provide proactive security. We propose a \emph{discover but wait}
strategy, in which users build circuits for anonymous communication, but
impose a delay between building a circuit and actually using it for
anonymous communication.  If misbehavior is detected by a testing random
walk within the delay period, the circuit will be terminated as $B$'s
neighbors blacklist it; otherwise, if a circuit survives some timeout
duration, then it can be used for anonymous communication.

{\bf Performance optimization:} Using all hops of a random walk for
anonymous communication has significant performance limitations. First,
the latency experienced by the user scales linearly with the random walk
length. Second, long circuit lengths reduce the overall throughput that
a system can offer to a user. Inspired by prior
work~\cite{shadowwalker}, we propose the following performance
optimization. Instead of using all hops of a random walk for anonymous
communication, the initiator can use the random walk as a peer discovery
process, and leverage the $k$th hop and the last hop to build a two-hop
circuit for anonymous communication.  In our evaluation, we find that
values of $k$ that are close to half the random walk length provide a
good trade-off between anonymity and performance.

\section{Evaluation}
\label{sec:evaluation}
In this section, we evaluate Pisces with theoretical analysis as well as
experiments using real-world social network topologies. In particular,
we (a) show the security benefits provided by the reciprocal neighbor policy, (b) evaluate the
security, performance, and overhead of our protocol that implements the
policy, and (c) evaluate the overall anonymity provided by Pisces.
We consider four datasets for our experiments, which were processed in a
manner similar to the evaluation done in SybilLimit~\cite{sybillimit}
and SybilInfer~\cite{sybilinfer}: (i) {\em a Facebook friendship graph
  from the New Orleans regional network~\cite{vishwanath-wosn09}},
containing 50,150 nodes and 772,843 edges; (ii) {\em a Facebook wall
  post interaction graph from the New Orleans regional
  network~\cite{vishwanath-wosn09}}, containing 29,140 users and 161,969
edges;
(iii) {\em a Facebook interaction graph from a moderate-sized regional
  network~\cite{wilson:eurosys09}}, containing about 380,564 nodes and
about 3.24 million edges; (iv) {\em a Facebook friendship graph from a
  moderate-sized regional network~\cite{wilson:eurosys09}}, containing
1,033,805 nodes and about 13.7 million edges. %

\begin{figure*}[ht]
\centering
\mbox{
\hspace{-0.2in}
\hspace{-0.12in}
\begin{tabular}{c}
\psfig{figure=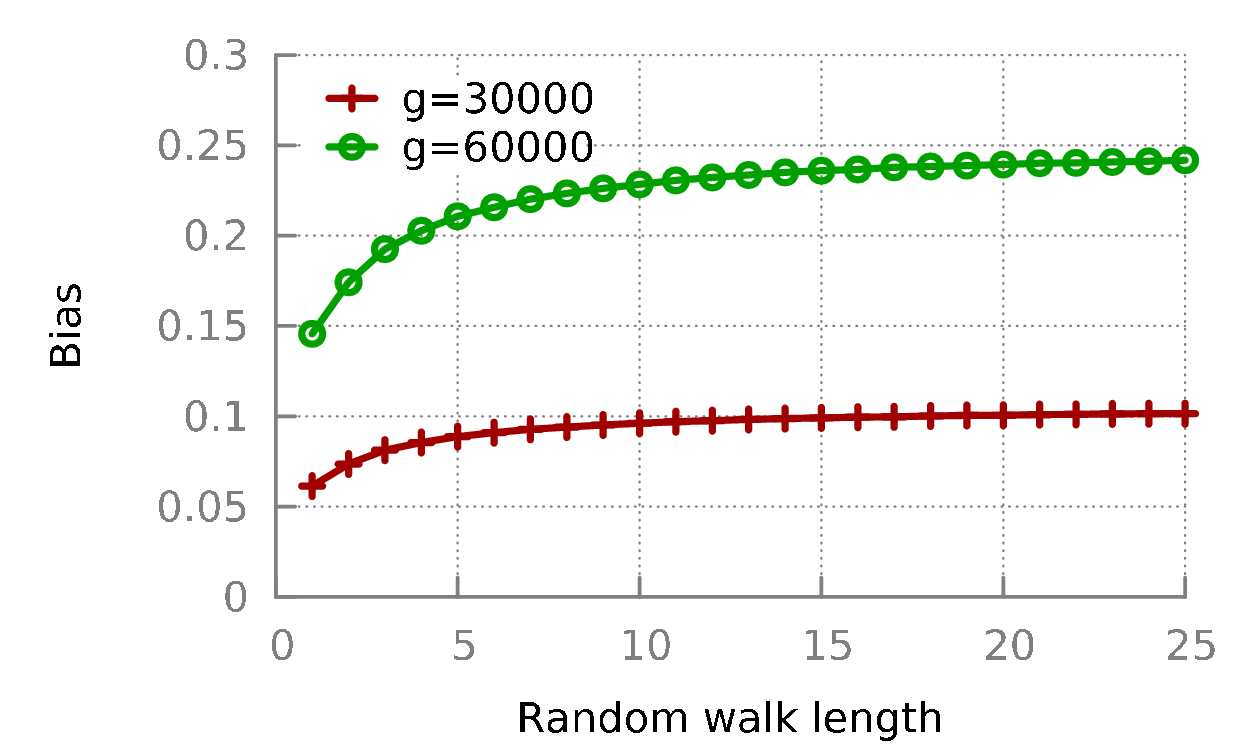,width=0.33 \textwidth}\vspace{-0.00in}\\
{(a)}
\end{tabular}
\begin{tabular}{c}
\psfig{figure=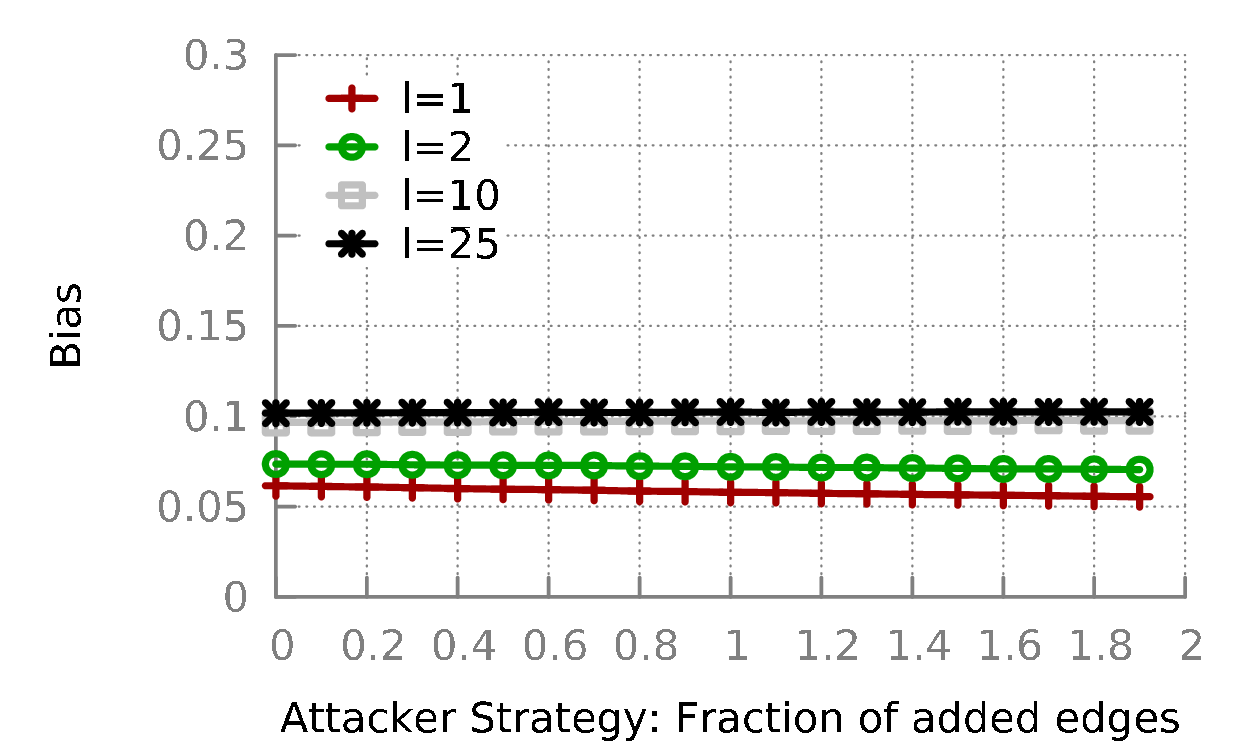,width=0.33 \textwidth}\vspace{-0.00in}\\
{(b)}
\end{tabular}
\hspace{-0.2in}
\begin{tabular}{c}
\psfig{figure=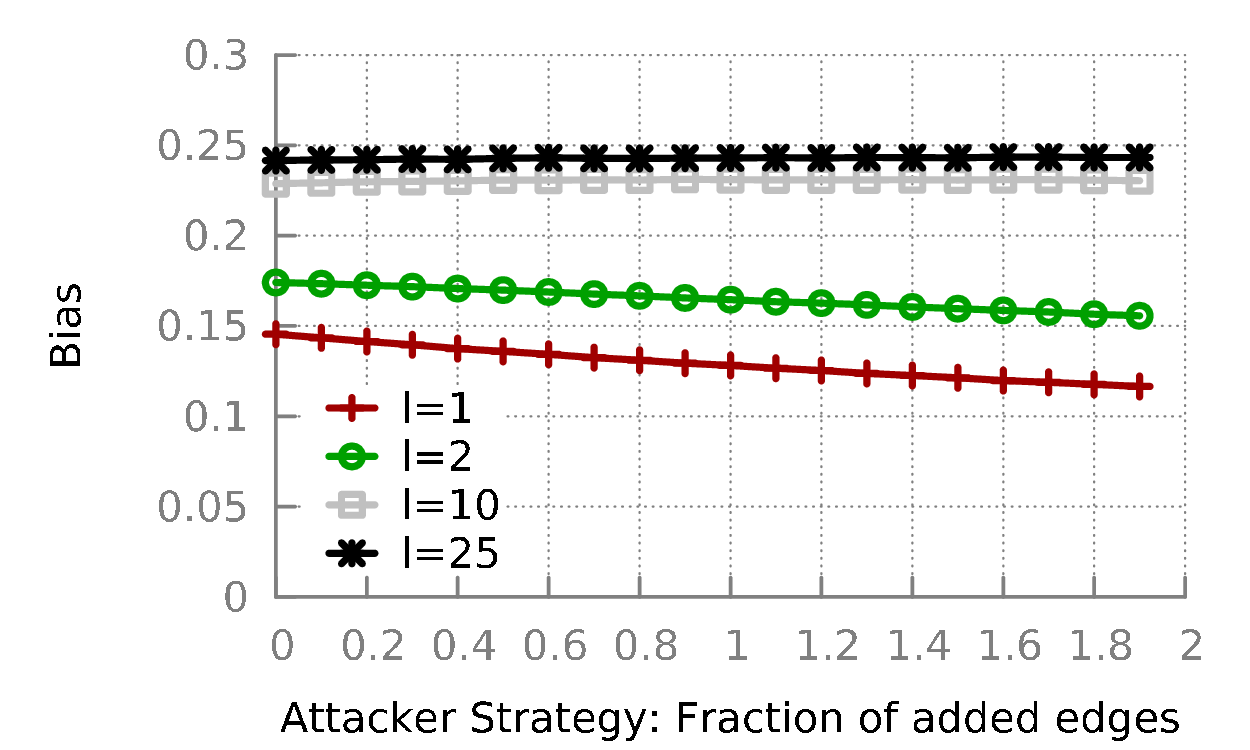,width=0.33\textwidth}\vspace{-0.00in}\\
{(c)}
\end{tabular}
}
\begin{ccs}
\vspace{-10pt}
\end{ccs}
\caption{{\em Probability of the $l$'th hop being compromised (Sampling
    Bias), under an increasing node degree attack [Facebook wall post graph]} (a) Without attack
  (b) g=30000 attack edges, (c) g=60000 attack edges.  For short random
  walks, this is a losing strategy for the adversary. For longer random
  walks, the adversary does not gain any advantage.}
\label{fig:degree-attack}
\begin{ccs}
\vspace{-10pt}
\end{ccs}
\end{figure*}
\begin{techreport}
\begin{figure*}[htp]
\centering
\mbox{
\hspace{-0.2in}
\hspace{-0.12in}
\begin{tabular}{c}
\psfig{figure=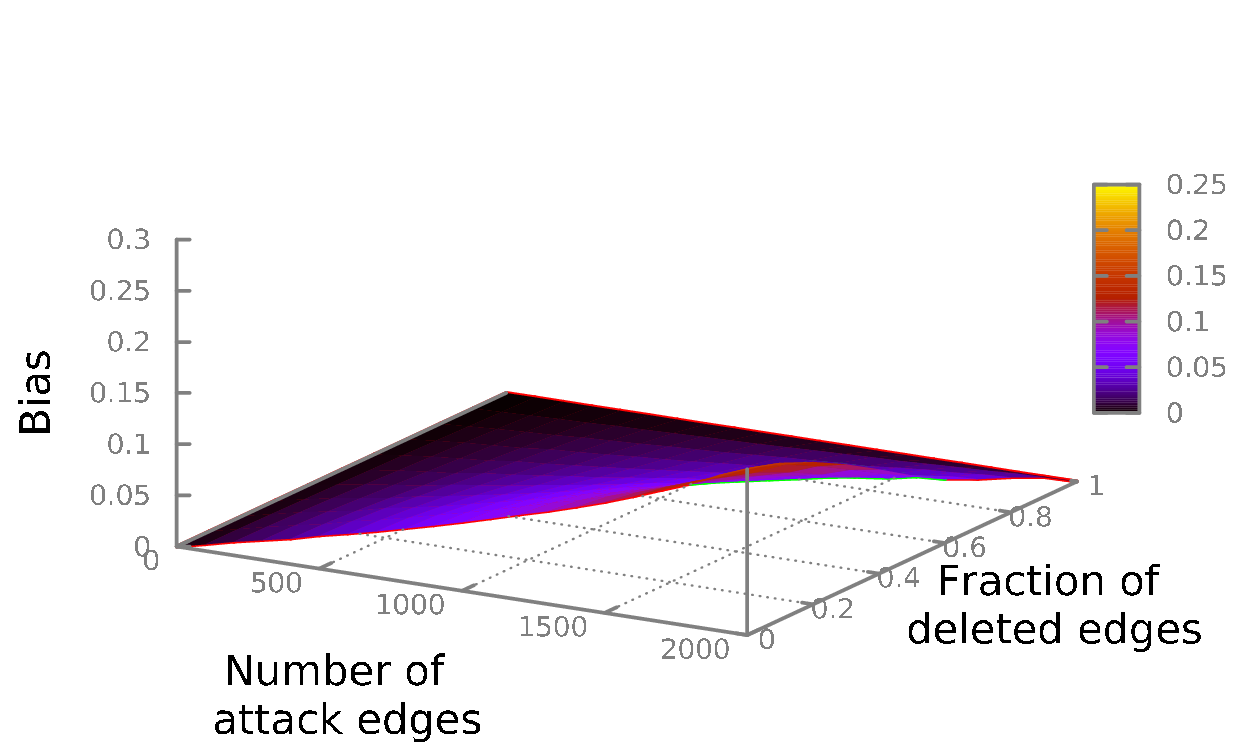,width=0.33\textwidth}\vspace{-0.00in}\\
{(a)}
\end{tabular}
\begin{tabular}{c}
\psfig{figure=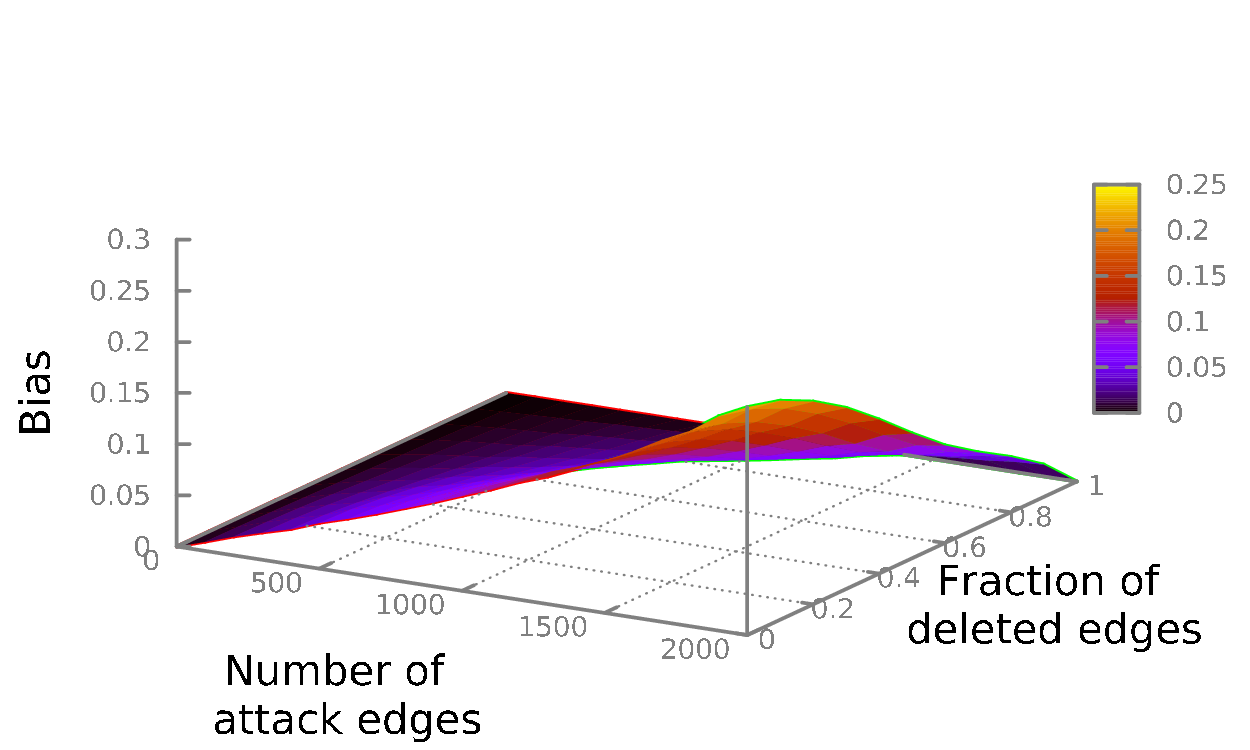,width=0.33\textwidth}\vspace{-0.00in}\\
{(b)}
\end{tabular}
\hspace{-0.2in}
\begin{tabular}{c}
\psfig{figure=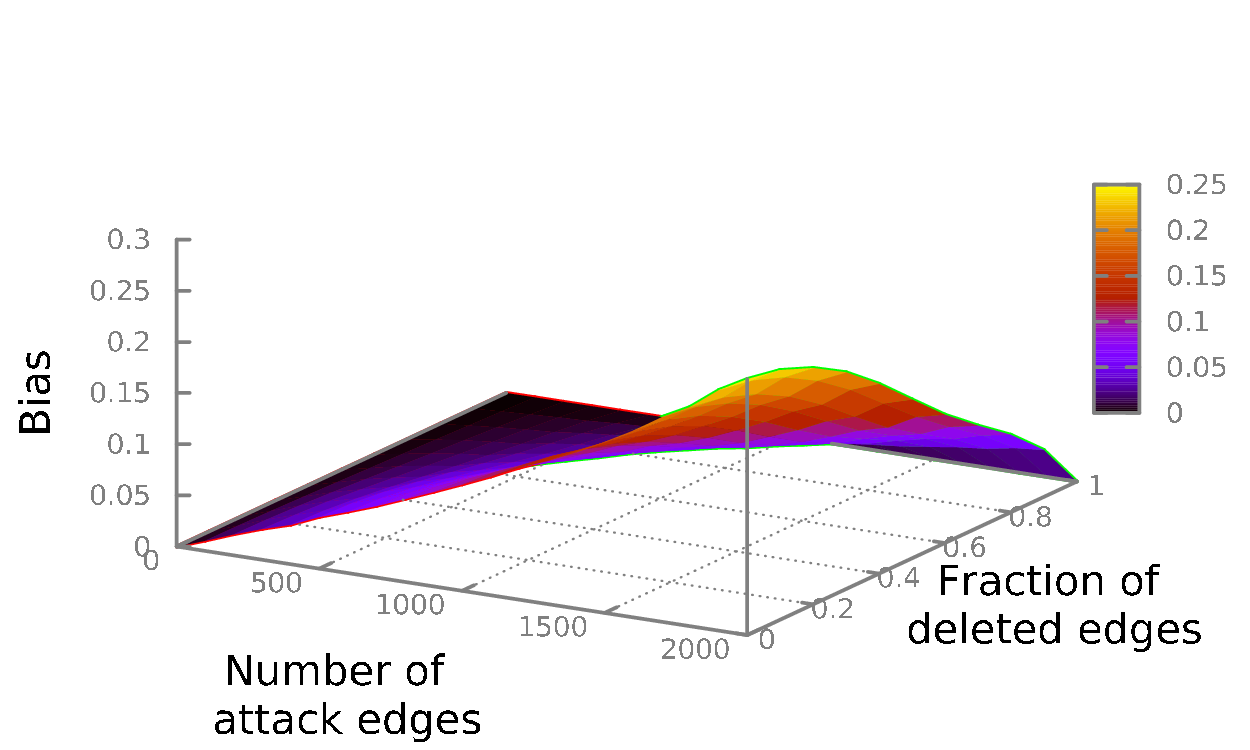,width=0.33\textwidth}\vspace{-0.00in}\\
{(c)}
\end{tabular}
}
\caption{{\em Probability of the $l$'th hop being compromised (Sampling
    Bias) under a route capture attack} (a) $l=1$, (b) $l=3$, (c)
  $l=6$. As more edges to the honest nodes are removed, the attacker's
  loss is higher. Note that the impact is very high on small length
  random walks, but gets smaller for longer length random walks. }
\label{fig:local-blacklisting}
\end{figure*}
\end{techreport}
\begin{figure*}[ht]
\centering
\mbox{
\hspace{-0.2in}
\hspace{-0.12in}
\begin{tabular}{c}
\psfig{figure=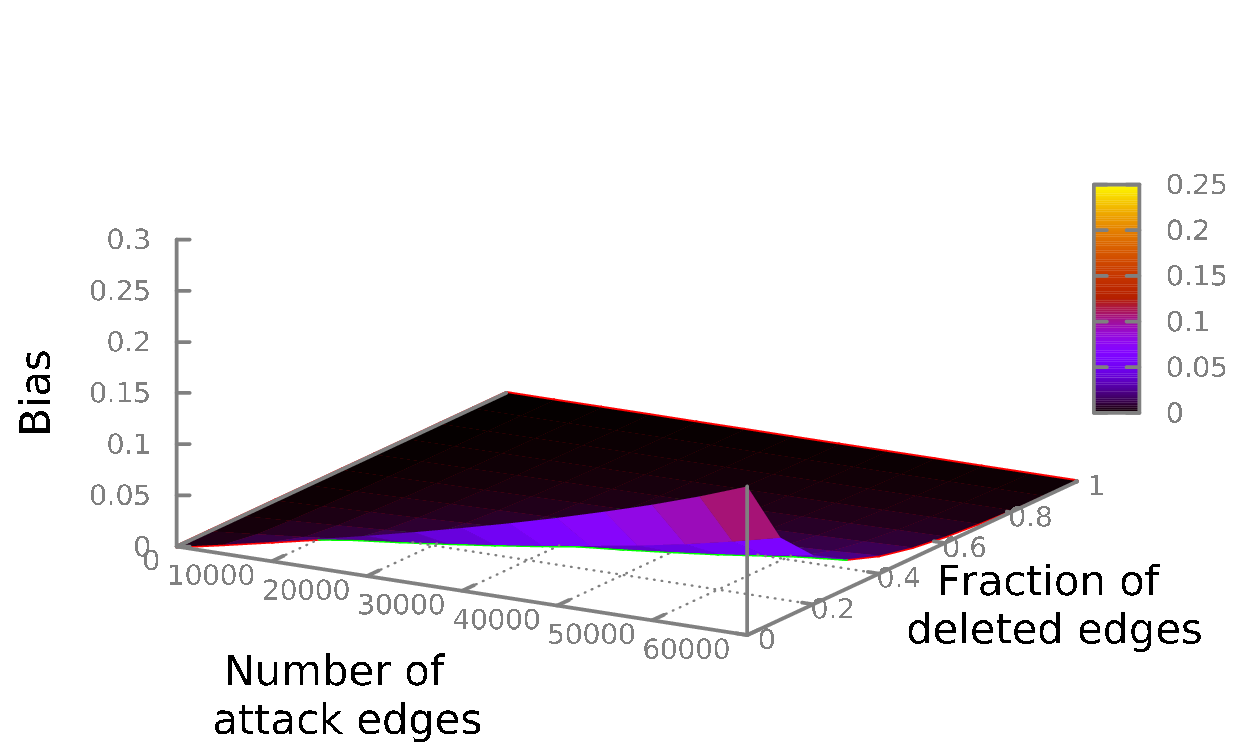,width=0.33\textwidth}\vspace{-0.00in}\\
{(a)}
\end{tabular}
\begin{tabular}{c}
\psfig{figure=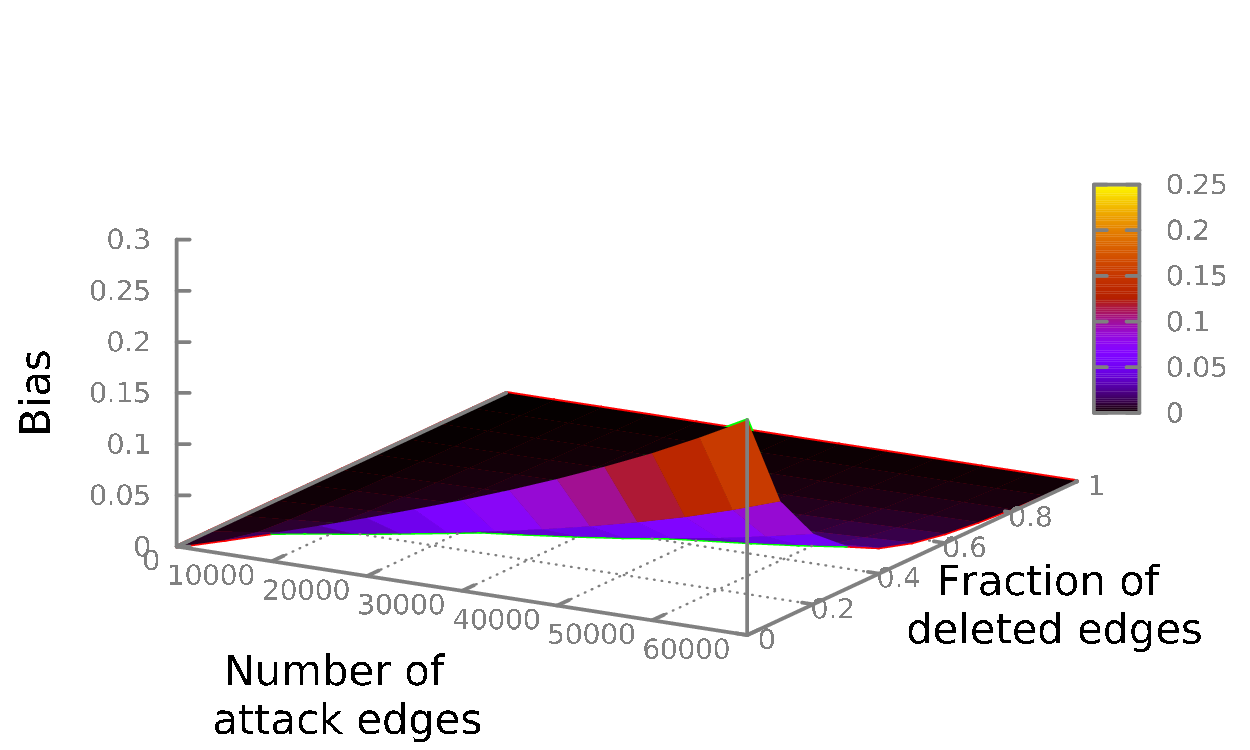,width=0.33\textwidth}\vspace{-0.00in}\\
{(b)}
\end{tabular}
\hspace{-0.2in}
\begin{tabular}{c}
\psfig{figure=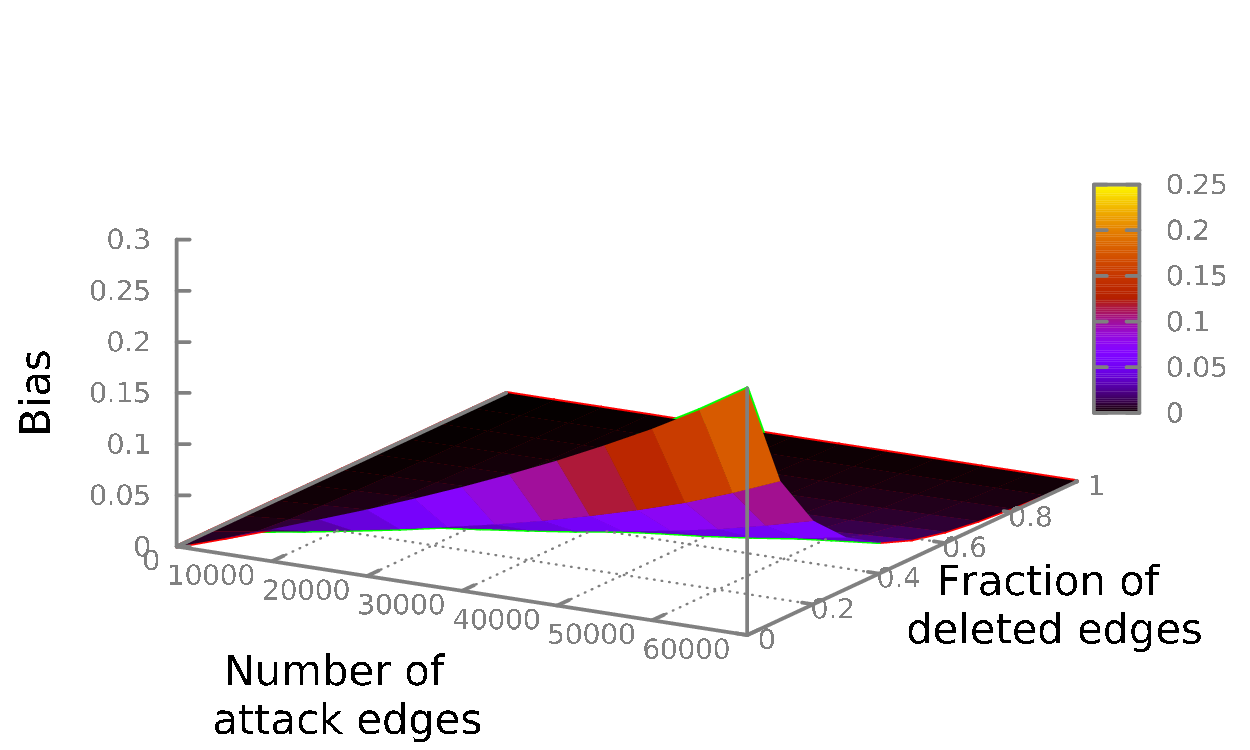,width=0.33\textwidth}\vspace{-0.00in}\\
{(c)}
\end{tabular}
}
\begin{ccs}
\vspace{-10pt}
\end{ccs}
\caption{{\em Probability of $l$'th hop being compromised (Sampling
    Bias) under route capture attack with global blacklisting [Facebook wall post graph]} (a)
  $l=1$, (b) $l=5$, (c) $l=25$ . As more edges to the honest nodes are
  removed, the attacker's loss is higher. %
}
\label{fig:global-blacklisting}
\begin{ccs}
\vspace{-10pt}
\end{ccs}
\end{figure*}

\subsection{Reciprocal Neighbor Policy}
To demonstrate the effectiveness of the reciprocal neighbor policy for implementing
trust-based anonymity, let us assume for now that there is a mechanism
to securely achieve the policy, i.e., that if a node $X$ does not advertise
a node $Y$ in its neighborlist, then $Y$ also excludes $X$. In this
scenario, we are interested in characterizing the probability
distribution of random walks. %
\begin{figure}[h]
\centering
\includegraphics[width=2.5in]{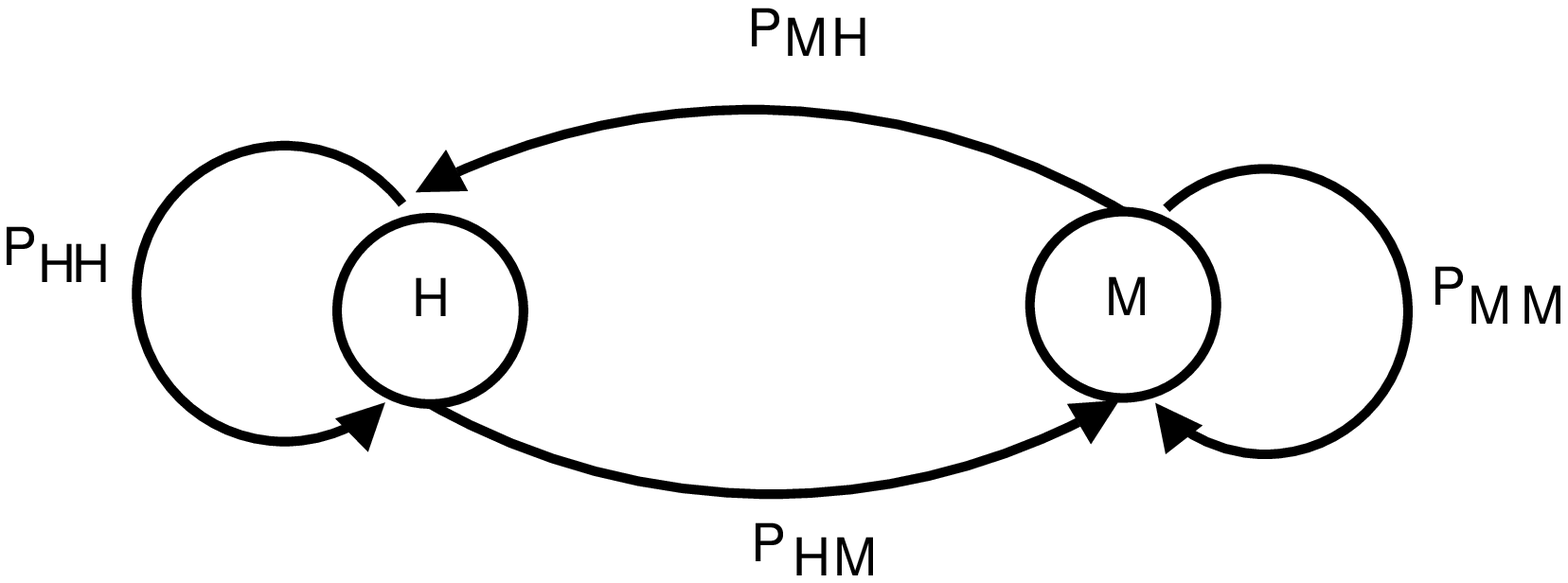}\\
\vspace{-10pt}
\caption{Attack Model.}
\label{fig:entanglement-state}
\vspace{-10pt}
\end{figure}

\begin{thm}\label{thm:node-degree}
{\em Node degree attack}: Given $h$ honest nodes and $m$ malicious nodes (including 
  Sybil nodes) that have $g$ edges (attack edges) amongst each other in an undirected and  connected social
  network, the stationary probability of random walks starting at an honest node
  and terminating at a malicious node cannot be biased by adding edges amongst 
  malicious nodes. Moreover, this stationary probability is independent of the 
  topology created amongst malicious nodes (as long as the social graph is connected).
\end{thm}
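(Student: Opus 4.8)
The plan is to exploit the single structural fact that makes the Metropolis--Hastings rule special: the transition matrix $P$ given above is a \emph{symmetric} stochastic matrix. For every edge $i\to j$ of the undirected graph $G$ we have $P_{ij}=\min(1/d_i,1/d_j)=\min(1/d_j,1/d_i)=P_{ji}$, for non-adjacent $i\neq j$ both entries vanish, and the diagonal entries are trivially equal to themselves; hence $P=P^{\top}$. Since $P$ is also row-stochastic by construction of the self-loop ($P\mathbf{1}=\mathbf{1}$), symmetry makes it doubly stochastic, so $\mathbf{1}^{\top}P=(P^{\top}\mathbf{1})^{\top}=(P\mathbf{1})^{\top}=\mathbf{1}^{\top}$. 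Therefore the uniform vector $\pi$ with $\pi_i=1/n$, where $n=h+m$ is the total number of nodes, is a stationary distribution of the walk, irrespective of the degree sequence.

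Next I would invoke connectivity. Because $G$ is connected, the chain driven by $P$ is irreducible (each single step along an edge of $G$ has positive probability), so its stationary distribution is \emph{unique}; combined with the previous step this forces $\pi_i=1/n$ for every node. Consequently, for a random walk in its stationary regime the probability that the endpoint lies in the malicious set is $\sum_{j\text{ malicious}}\pi_j=m/n=m/(h+m)$, and this value does not depend on the starting vertex. To finish, observe that adding attack edges among the $m$ malicious nodes (or rewiring them into any connected configuration) changes the degrees $d_j$ of those nodes, hence changes individual transition probabilities, but it neither alters $n=h+m$ nor breaks connectivity (adding edges can only help); so the stationary distribution remains uniform with the same value $1/n$, the malicious mass stays $m/(h+m)$, and it is manifestly independent of the topology chosen among the malicious nodes. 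Both sentences of the theorem follow.

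I do not expect a real computational obstacle here; the delicate point is interpretive, and it is the word \emph{stationary}. The conclusion is exact only in the equilibrium regime: for a walk of small fixed length $\ell$ started at a particular honest node the endpoint law is $\pi^{(0)}P^{\ell}$, which genuinely does depend on the malicious subgraph, so one must be careful not to overclaim a finite-length version --- that transient behavior is precisely what the experiments in Figure~\ref{fig:degree-attack} quantify (the adversary in fact loses for short walks and gains nothing for long ones). A secondary technicality to dispatch cleanly is aperiodicity: irreducibility already gives existence and uniqueness of $\pi$, but if one also wants the endpoint distribution to converge to $\pi$, one notes that the heterogeneous degrees of a social graph guarantee $P_{ii}>0$ at some vertex, which makes the chain aperiodic.
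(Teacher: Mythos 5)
Your proposal is correct and follows essentially the same route as the paper's proof: both rest on the observation that the Metropolis--Hastings transition matrix is symmetric, which forces the uniform distribution $\pi_i = 1/n$ to be stationary, and then use connectedness to conclude uniqueness, so that the malicious mass is $m/(h+m)$ regardless of edges added among malicious nodes. The only (cosmetic) differences are that you phrase the symmetry step as double stochasticity ($\mathbf{1}^{\top}P = \mathbf{1}^{\top}$) where the paper verifies detailed balance $\pi_i P_{ij} = \pi_j P_{ji}$ directly, and that you more carefully separate uniqueness (irreducibility) from convergence of the finite-length law (aperiodicity), a distinction the paper blurs by citing non-bipartiteness in the uniqueness step.
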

\begin{proof}
Let us denote $\pi_i$ as the stationary probability of random walks (independent of 
the initial state of the random walk) for node $i$, and let $P_{ij}$ denote the 
transition probability from node i to node j. Let $n$ denote the total number of 
nodes in the social network ($n=h+m$). Since the transition probabilities between 
nodes in the Metropolis-Hastings random walks are symmetric
($P_{ij}=P_{ji}=\min\left(\frac{1}{degree(i)},\frac{1}{degree(j)}\right)$),
observe that $\forall z, \pi_z=\frac{1}{n}$ is solution to the equation
$\pi_i \cdot P_{ij} = \pi_j \cdot P_{ji}$. Since social networks are
non-bipartite as well as undirected graphs, the solution to the above
equation ($\pi=\frac{1}{n}$) must be the unique stationary distribution
for the random walk~\cite{aldous-reversible}. Thus the stationary probability of 
random walks terminating at any node in the system is uniform and independent of 
the number of edges amongst malicious nodes, or the topology amongst malicious nodes 
in the system (as long as the graph remains connected). %
\end{proof}
We validate Theorem~\ref{thm:node-degree} using simulation results on
the Facebook wall post interaction
graph. Figure~\ref{fig:degree-attack}(a) depicts the probability of a
Pisces random walk terminating at a malicious node as a function of
random walk length for $g=30000$ (2900 malicious nodes) and $g=60000$
(7300 malicious nodes). We can see that the
random walk quickly reaches its stationary distribution, and at the
stationary distribution, the probability of a random walk terminating at
one of the malicious nodes is 0.1 and 0.25 respectively (which is the 
adversary's fair share).
Figure~\ref{fig:degree-attack}(b) and (c) depict the probability of a
random walk terminating at one of the malicious nodes under the node
degree attack, for $g=30000$ and $g=60000$ respectively. We can see that
adding edges amongst malicious nodes does not help the adversary (even for 
transient length random walks).
\begin{techreport}
\begin{proof}
To characterize the transient distribution of the random walk, we model
the process as a Markov chain (shown in
Figure~\ref{fig:entanglement-state}). Let us denote the honest set of
nodes by $H$, and the set of malicious nodes by $M$.

The probability of an $l$ hop random walk ending in the malicious region
($P(l)$) is given by:

\begin{align}
P(l) = P(l-1) \cdot P_{MM} + (1-P(l-1)) \cdot P_{HM}
\end{align}

The terminating condition for the recursion is $P(0)=1$, which reflects
that the initiator is honest.

We can estimate the probabilities $P_{HM}$ and $P_{MH}$ as the forward
and backward conductance~\cite{conductance} between the honest and the
malicious nodes, denoted by $\phi_F$ and $\phi_B$ respectively.  Thus we
have that:

\begin{align}
P(l) & =  P(l-1) \cdot (1-\phi_B) + (1-P(l-1)) \cdot \phi_F \nonumber \\
 & =  P(l-1) \cdot (1- \phi_B - \phi_F) + \phi_F
\label{eqn:secure-walks}
\end{align}

\begin{align}
P(l) =  \phi_F  \cdot [ 1 + (1- \phi_B - \phi_F) + (1-\phi_B-\phi_F)^2 \nonumber \\ 
 \ldots + (1-\phi_B-\phi_F)^{l-1} ] 
\label{eqn:secure-walks2}
\end{align}

With $g$ edges between honest and malicious nodes, we can estimate the forward conductance $\phi_F$ as follows: 

\begin{align}
\phi_F & = \frac{\Sigma_{x \in H} \Sigma_{y \in M} \pi_x \cdot P_{xy}}{\pi_H} \nonumber \\ 
&  = \frac{\Sigma_{x \in H} \Sigma_{y \in M} \cdot P_{xy}}{|H|}  = O\left(\frac{g}{h}\right)
\label{eqn:forward-conductance}
\end{align}

Similarly, with $g$ edges between honest and malicious nodes, the backward conductance $\phi_B$ is estimated as: 

\begin{align}
\phi_B & = \frac{\phi_F \cdot |H|} {|M|} = \frac{O(\frac{g}{h}) \cdot h} { m}  = O\left(\frac{g}{m}\right)   
\end{align}

Thus, we have that $\phi_F = O(\frac{g}{h})$, and
$\phi_B=O(\frac{g}{m})$. If malicious nodes exclude $y$ edges to honest
nodes from their fingertables, application of the RNP ensures that the
honest nodes also exclude the $y$ edges from their fingertables (local
blacklisting). %
Thus, route capture attacks result in deleting of attack edges which 
reduces both forward and backward transition probabilities. 

Observe that the probability of the first hop being in the malicious
region is equal to $\phi_F$, which gets reduced under attack. We will
now show this for a general value of $l$. Following
Equation~\ref{eqn:secure-walks2} and using $\Sigma_{i=0}^{i=m} x^i =
\frac{1-x^{m+1}}{1-x}$ for $0 < x < 1$, we have that:

\begin{align}
P(l) & = \frac{\phi_F  \cdot (1 - (1- \phi_B - \phi_F)^{l})}{1-(1-\phi_B-\phi_F)} \nonumber \\
& = \frac{\phi_F}{\phi_F+\phi_B} \cdot (1-(1-\phi_B+\phi_F)^l) 
\label{eqn:secure-walks3}
\end{align}

Using $\phi_B=\frac{h}{m} \cdot \phi_F$, we have that:

\begin{align}
P(l) & = \frac{m}{n} \cdot (1-(1-\phi_B+\phi_F)^l) \nonumber \\ 
P(l) & = \frac{m}{n} \cdot \left(1-\left(1-\frac{n}{m}\cdot \phi_F\right)^l\right) 
\label{eqn:secure-walks4}
\end{align}

Differentiating $P(l)$ with respect to $\phi_F$, we have that:

\begin{align}
\frac{d}{d\phi_F} (P(l)) & = \frac{m}{n} \cdot \left(l\cdot \left(1-\frac{n}{m} \cdot \phi_F\right)^{l-1}\right)
\end{align}

Note that $(1-\frac{n}{m}\phi_F) = (1-\phi_B-\phi_F) \geq 0$. This
implies $\frac{d}{d\phi_F}P(l) \geq 0$.  Thus, $P(l)$ is an increasing
function of $\phi_F$, and since the reduction of the number of attack
edges reduces $\phi_F$, it also leads to a reduction
in the transient distribution of the random walk terminating at
malicious nodes.
\end{proof}

Next, we validate our analysis using simulation.
Figure~\ref{fig:local-blacklisting} depicts the probability of a random
walk terminating at one of the malicious nodes as a function of the
number of attack edges as well as the fraction of sacrificed attack
edges. We can see that in all scenarios, this bias is a decreasing
function of the fraction of deleted edges. Moreover, it is notable that
the decrease is larger for shorter random walks than for longer random
walks. This reflects the fact that the stationary distribution under
local blacklisting being unchanged, which we address next.
\end{techreport}
\begin{thm}
{\em Global blacklisting}: suppose that $x \le m$ malicious nodes
sacrifice $y_1 \le g$ attack edges, and that %
these %
malicious nodes originally had $y_2 \le g$ attack edges. The stationary
probability of random walk terminating at malicious nodes gets reduced
proportional to $x$. The transient distribution of random walks
terminating at malicious nodes is reduced as a function of $y_2$.
\end{thm}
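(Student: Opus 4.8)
The plan is to prove the two assertions separately, reusing the detailed-balance argument of Theorem~\ref{thm:node-degree} for the stationary claim and the conductance-based transient analysis (already carried out for local blacklisting) for the second; throughout I adopt the same standing assumption as in Theorem~\ref{thm:node-degree}, namely that the social graph stays connected (and, being a social graph, non-bipartite) after the blacklisted nodes are deleted. For the \textbf{stationary distribution}, I would first observe that a globally blacklisted node leaves the social graph \emph{entirely}, together with all of its incident edges. Deleting the $x$ offending malicious nodes leaves an undirected graph $G'$ on $n'=n-x$ vertices, of which $m'=m-x$ are malicious. On $G'$ the Metropolis--Hastings transition matrix is again symmetric, so the identical detailed-balance computation used in Theorem~\ref{thm:node-degree} gives the uniform stationary distribution $\pi'_z=1/n'$, independent of the surviving topology among the malicious nodes. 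Hence the stationary probability that a walk terminates at a malicious node drops from $m/n$ to $(m-x)/(n-x)$, a decrease of
\[
\frac{m}{n}-\frac{m-x}{n-x}=\frac{x(n-m)}{n(n-x)}=\frac{xh}{n(n-x)},
\]
which is monotone increasing in $x$ and, for $x\ll n$, essentially linear in $x$ --- this is the ``reduced proportional to $x$'' statement.

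For the \textbf{transient distribution}, I would track the effect on the honest/malicious cut. By hypothesis the $x$ deleted nodes were collectively incident to $y_2$ attack edges, and since they leave the graph with all of their edges, exactly those $y_2$ attack edges vanish, leaving $g-y_2$ attack edges between the $h$ honest nodes and the $m'=m-x$ surviving malicious nodes. Re-running the forward-conductance estimate from the local-blacklisting analysis on $G'$ gives $\phi_F'=O\!\bigl((g-y_2)/h\bigr)$, which is decreasing in $y_2$. Substituting into the same transient recursion for $P(l)$, the probability that an $l$-hop walk ends in the malicious region, yields the familiar closed form $P(l)=\frac{m'}{n'}\bigl(1-(1-\tfrac{n'}{m'}\phi_F')^l\bigr)$, whose derivative with respect to $\phi_F'$ is nonnegative (since $(1-\tfrac{n'}{m'}\phi_F')=(1-\phi_B'-\phi_F')\ge 0$, using $h'=h$ so $h'+m'=n'$); hence $P(l)$ shrinks as $y_2$ grows, which is the ``reduced as a function of $y_2$'' statement.

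The \textbf{main obstacle} is that global blacklisting perturbs $m$, $n$, and the conductances at the same time, so the two claims must be read as ceteris-paribus monotonicities: the stationary claim isolates the exact, topology-free dependence on $x$ inherited from the uniform stationary distribution, while the transient claim isolates the dependence on $y_2$ routed through $\phi_F'$. I would also note that the conductance expressions are the same asymptotic estimates used earlier, so ``reduced'' is meant in that approximate sense, and that connectivity of $G'$ is what makes the stationary distribution well defined on the component containing the initiator. Finally I would emphasize the qualitative takeaway behind the comparison with local blacklisting: the transient loss is governed by $y_2$, not $y_1$ --- even a cheater that only tried to drop $y_1$ attack edges forfeits all $y_2$ of its attack edges once it is caught, so global blacklisting strictly dominates the local variant from the honest users' standpoint.
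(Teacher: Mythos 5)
Your proposal is correct and follows essentially the same route as the paper: the uniform stationary distribution on the surviving connected graph (via the symmetry/detailed-balance argument of Theorem~\ref{thm:node-degree}) gives the drop from $m/n$ to $(m-x)/(n-x)$, and the conductance-based recursion with the closed form for $P(l)$ and its nonnegative derivative in $\phi_F$ gives the transient claim. Your version is marginally more explicit than the paper's in substituting $\phi_F'=O\bigl((g-y_2)/h\bigr)$ and in computing the exact stationary decrease $xh/\bigl(n(n-x)\bigr)$, but these are presentational refinements, not a different argument.
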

\begin{proof}
If $x$ malicious nodes perform the route capture attack and are globally
blacklisted, these nodes become disconnected from the social trust
graph.
It follows from our analysis of Theorem~\ref{thm:node-degree} that the
stationary distribution of the random walk is uniform for all
\emph{connected} nodes in the graph. Thus, the stationary distribution
of random walks terminating at malicious nodes gets reduced from
$\frac{m}{m+h}$ to $\frac{m-x}{m-x+h}$.

To characterize the transient distribution of the random walk, we model
the process as a Markov chain. %
 Let us denote the honest set of
nodes by $H$, and the set of malicious nodes by $M$.
The probability of an $l$ hop random walk ending in the malicious region
($P(l)$) is given by:

\begin{align}
P(l) = P(l-1) \cdot P_{MM} + (1-P(l-1)) \cdot P_{HM}
\end{align}

The terminating condition for the recursion is $P(0)=1$, which reflects
that the initiator is honest.
We can estimate the probabilities $P_{HM}$ and $P_{MH}$ as the forward
and backward conductance~\cite{conductance} between the honest and the
malicious nodes, denoted by $\phi_F$ and $\phi_B$ respectively.  Thus we
have that:

\begin{align}
P(l) & =  P(l-1) \cdot (1-\phi_B) + (1-P(l-1)) \cdot \phi_F \nonumber \\
 & =  P(l-1) \cdot (1- \phi_B - \phi_F) + \phi_F
\label{eqn:secure-walks}
\end{align}

\begin{align}
P(l) =  \phi_F  \cdot [ 1 + (1- \phi_B - \phi_F) + (1-\phi_B-\phi_F)^2 \nonumber \\
 \ldots + (1-\phi_B-\phi_F)^{l-1} ]
\label{eqn:secure-walks2}
\end{align}

We note that if an adversary connects a chain of Sybils (say of degree 2) to an attack edge, a random 
walk starting from an honest node and traversing the attack edge to enter the malicious region has a 
non-trivial probability of coming back to the honest region - via the attack edge (Pisces allows backward
transition along edges). Our analysis models the probability of returning to the honest region using 
the notion of backward conductance.  

With $g$ edges between honest and malicious nodes, we can estimate the forward conductance $\phi_F$ as follows:

\begin{align}
\phi_F & = \frac{\Sigma_{x \in H} \Sigma_{y \in M} \pi_x \cdot P_{xy}}{\pi_H} \nonumber \\
&  = \frac{\Sigma_{x \in H} \Sigma_{y \in M} \cdot P_{xy}}{|H|}  = O\left(\frac{g}{h}\right)
\label{eqn:forward-conductance}
\end{align}

Similarly, with $g$ edges between honest and malicious nodes, the backward conductance $\phi_B$ is estimated as:

\begin{align}
\phi_B & = \frac{\phi_F \cdot |H|} {|M|} = \frac{O(\frac{g}{h}) \cdot h} { m}  = O\left(\frac{g}{m}\right)
\end{align}

Thus, we have that $\phi_F = O(\frac{g}{h})$, and
$\phi_B=O(\frac{g}{m})$. If malicious nodes exclude $y$ edges to honest
nodes from their fingertables, application of the RNP ensures that the
honest nodes also exclude the $y$ edges from their fingertables (local
blacklisting). %
Thus, route capture attacks result in deleting of attack edges which
reduces both forward and backward transition probabilities.
Observe that the probability of the first hop being in the malicious
region is equal to $\phi_F$, which gets reduced under attack. We will
now show this for a general value of $l$. Following
Equation~\ref{eqn:secure-walks2} and using $\Sigma_{i=0}^{i=m} x^i =
\frac{1-x^{m+1}}{1-x}$ for $0 < x < 1$, we have that:

\begin{align}
P(l) & = \frac{\phi_F  \cdot (1 - (1- \phi_B - \phi_F)^{l})}{1-(1-\phi_B-\phi_F)} \nonumber \\
& = \frac{\phi_F}{\phi_F+\phi_B} \cdot (1-(1-\phi_B+\phi_F)^l)
\label{eqn:secure-walks3}
\end{align}

Using $\phi_B=\frac{h}{m} \cdot \phi_F$, we have that:

\begin{align}
P(l) & = \frac{m}{n} \cdot (1-(1-\phi_B+\phi_F)^l) \nonumber \\
P(l) & = \frac{m}{n} \cdot \left(1-\left(1-\frac{n}{m}\cdot \phi_F\right)^l\right)
\label{eqn:secure-walks4}
\end{align}

Differentiating $P(l)$ with respect to $\phi_F$, we have that:

\begin{align}
\frac{d}{d\phi_F} (P(l)) & = \frac{m}{n} \cdot \left(l\cdot \left(1-\frac{n}{m} \cdot \phi_F\right)^{l-1}\right)
\end{align}

Note that $(1-\frac{n}{m}\phi_F) = (1-\phi_B-\phi_F) \geq 0$. This
implies $\frac{d}{d\phi_F}P(l) \geq 0$.  Thus, $P(l)$ is an increasing
function of $\phi_F$, and since the reduction of the number of attack
edges reduces $\phi_F$, it also leads to a reduction
in the transient distribution of the random walk terminating at
malicious nodes. Thus, it follows that a reduction in the number
of remaining attack edges $y_2$ reduces the transient distribution
of random walks terminating at malicious nodes.
\end{proof}

Figure~\ref{fig:global-blacklisting} depicts the probability of random
walks terminating at malicious nodes as a function of number of attack
edges as well as the fraction of deleted edges when honest nodes use a
global blacklisting policy. We can see that sacrificing attack
edges so as to perform route capture attacks is a losing strategy for
the attacker. Moreover, the decrease is similar for all random walk
lengths; this is because even the stationary distribution of the random
walk terminating at malicious nodes is reduced.

{\bf Anonymity Implication:} To de-anonymize the user without the help
of the destination node (e.g. the website to which the user connects
anonymously), both the first hop and the last hop of the random walk
need to be malicious to observe the connecting user and her
destinations, respectively. End-to-end timing analysis~\cite{wang:ccs05,
  rainbow} makes it so that controlling these two nodes is sufficient
for de-anonymization. Figure~\ref{fig:e2e-global} depicts the
probability of such an attack being successful as a function of the
number of attack edges and the fraction of deleted edges using the
global blacklisting policy. %
We can see that the probability of attack is %
a decreasing function of the fraction of deleted edges.%
Thus we conclude that route capture attacks are %
a losing strategy %
against our approach.

So far, we validated our analysis using simulations assuming an %
an ideal Sybil defense. We also validated our analysis %
using a more realistic Sybil defense that permits a bounded number (set to 10~\cite{sybillimit}) 
of Sybils per attack edge, which we show in Figure~\ref{fig:e2e-global-sybil}.%
\begin{techreport}
\begin{figure}[p]
\centering
\includegraphics[width=2.5in]{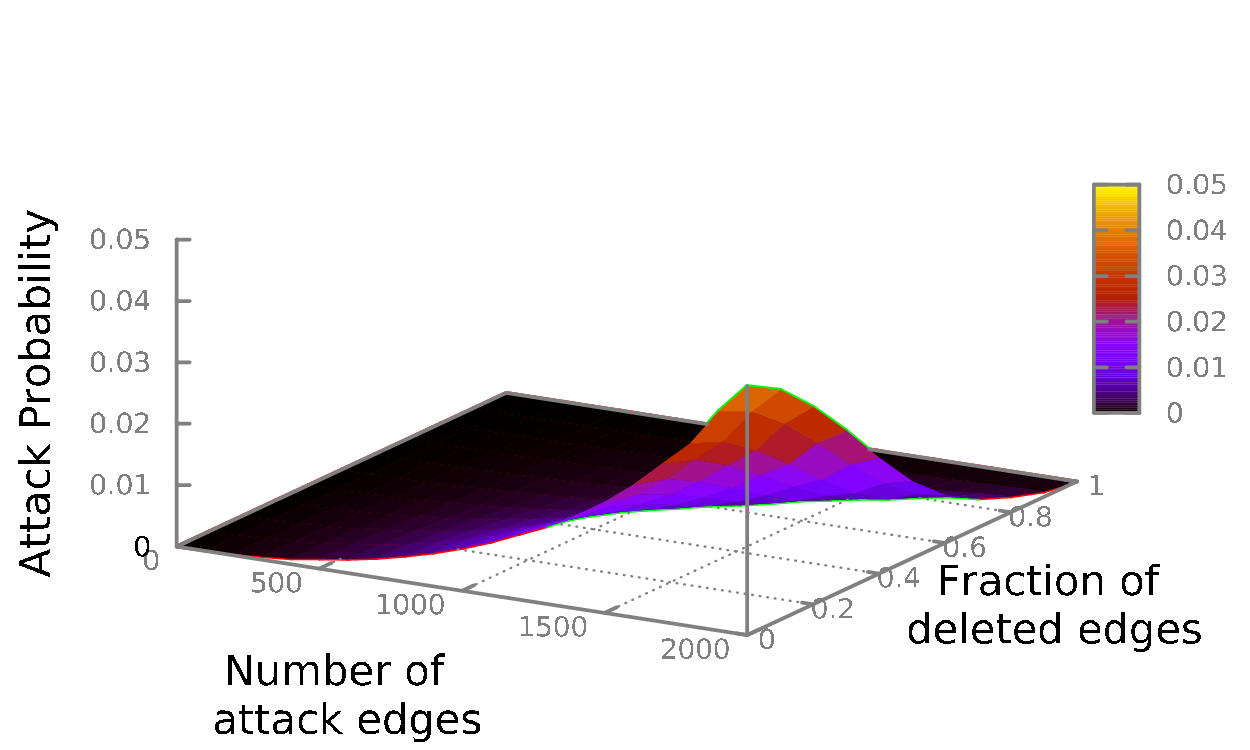}\\
\caption{Probability of end-to-end timing analysis under route capture
  attack}
\label{fig:e2e-local}
\end{figure}
\end{techreport}

\begin{figure}[tp]
\centering
\includegraphics[width=2.5in]{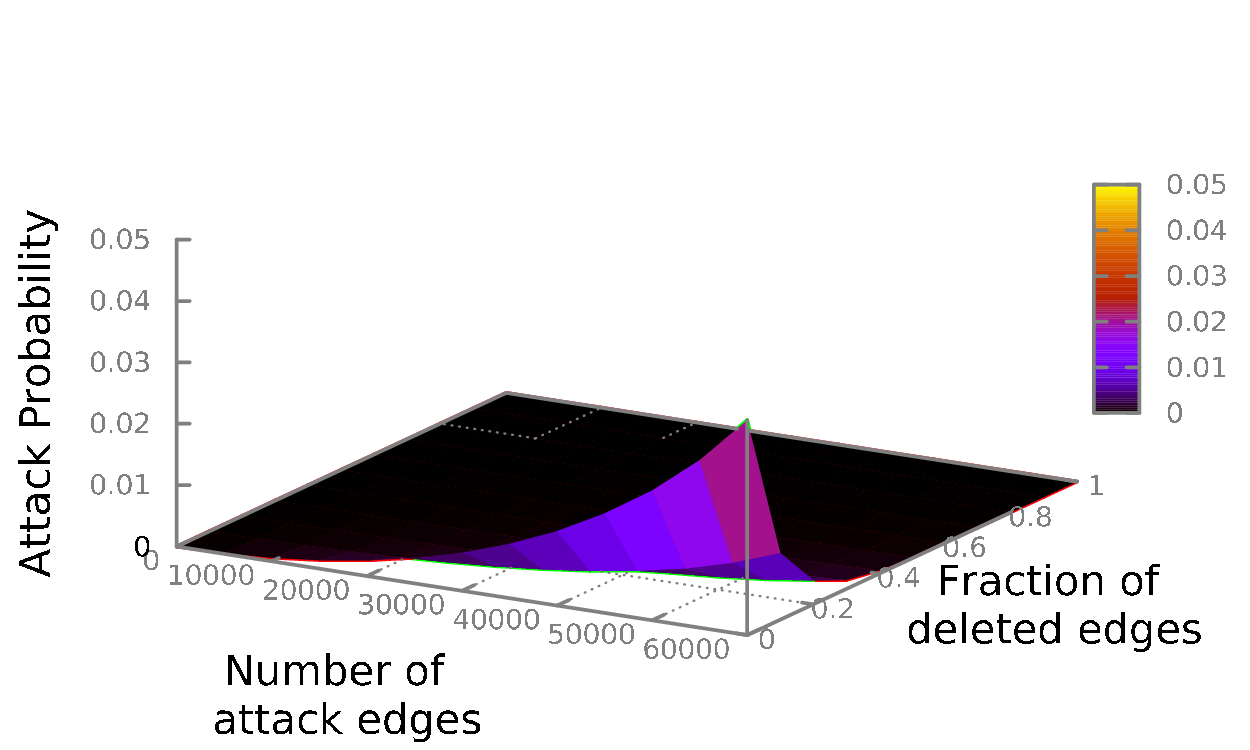}\\
\begin{ccs}
\vspace{-10pt}
\end{ccs}
\caption{Probability of end-to-end timing analysis under route capture
  attack with global blacklisting [Facebook wall graph]}
\label{fig:e2e-global}
\begin{ccs}
\vspace{-10pt}
\end{ccs}
\end{figure}

\begin{figure}[tp]
\centering
\includegraphics[width=2.5in]{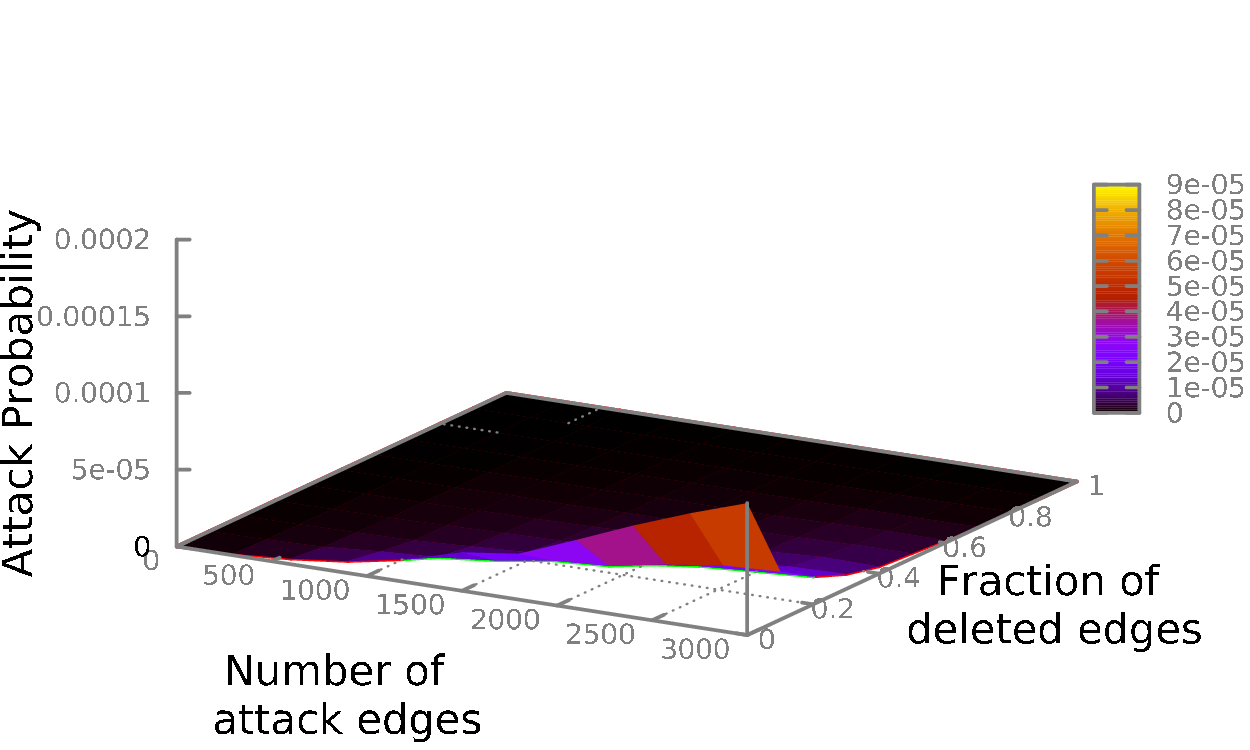}\\
\begin{ccs}
\vspace{-10pt}
\end{ccs}
\caption{Probability of end-to-end timing analysis under route capture
  attack with global blacklisting using 10 Sybils per attack edge [Facebook wall graph]}
\label{fig:e2e-global-sybil}
\begin{ccs}
\vspace{-10pt}
\end{ccs}
\end{figure}

\subsection{Securing Reciprocal Neighborhood Policy}
We now discuss the security and performance of our protocol that
implements the reciprocal neighbor policy.

{\bf Security proof sketch:} Suppose that a malicious node $A$
aims to exclude an honest node $B$ from its neighborlist. To pass node
$B$'s local integrity checks, node $A$ has to return a neighborlist to
node $B$ that correctly advertises node $B$. Since random walks for
anonymous communication are indistinguishable from testing random walks,
there is a probability that the adversary will advertise a conflicting
neighbor list that does not include node $B$ to an initiator of the
testing random walk. The initiator of the testing random walk will
insert the malicious neighbor list into the Whanau DHT, and node $B$ can
perform a robust lookup for node $A$'s key to obtain the conflicting
neighbor list. Since Whanau only provides availability, node $B$ can
check for integrity of the results by verifying node $A$'s
signature. Since honest nodes never advertise two conflicting lists
within a time interval, node $B$ can infer that node $A$ is malicious.
{\bf Performance Evaluation:}
\label{sec:perf}
We analyze the number of testing random walks that each node must
perform to achieve a high probability of detecting a malicious node that
attempts to perform a route capture attack. Nodes must perform enough
testing walks such that a high percentage of compromised nodes (which
are connected to honest nodes) have been probed in a single time
slot. First, we consider a defense strategy where honest nodes only
insert the terminal hop of the testing random walks in Whanau ({\em
  Strategy 1}). Intuitively, from the coupon collectors problem, $\log
n$ walks per node should suffice to catch a malicious node with high
probability. Indeed, from Figure~\ref{fig:detection-prob}, we can see
that six testing walks per time interval suffice to catch a malicious
node performing route capture attacks with high probability. The honest
nodes can also utilize all hops of the testing random walks to check for
conflicts ({\em Strategy 2}), in which case only two or three testing
walks are required per time interval (at the cost of increased
communication overhead for the DHT operations).
\begin{figure}[tp]
\centering
\includegraphics[width=2.5in]{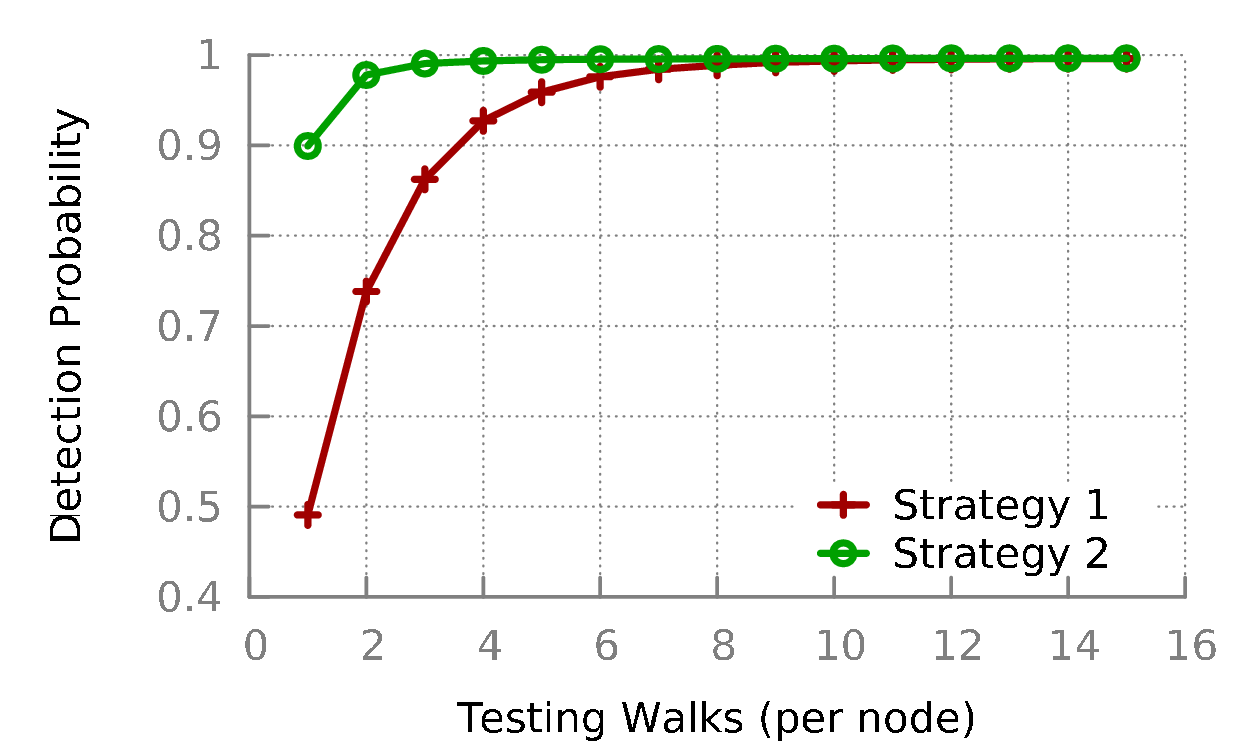}\\
\begin{ccs}
\vspace{-10pt}
\end{ccs}
\caption{Probability of detecting a route capture [Facebook wall post
    interaction graph]. The attack model includes 10 Sybils per attack
  edge.}
\label{fig:detection-prob}
\begin{ccs}
\vspace{-10pt}
\end{ccs}
\end{figure}

Next, we address the question of how to choose the duration of the time
interval ($t$). The duration of the time slot governs the trade-off
between communication overhead and reliability of circuit
construction. A large value of the time slot interval results in a
smaller communication overhead but higher unreliability in circuit
construction, since nodes selected in the random walk are less likely to
still be online. On the other hand, a smaller value of the time interval
provides higher reliability in higher circuit construction at the cost
of increased communication overhead, since a fixed number of testing
walks must be performed within the duration of the time slot. We can see
this trade-off in Figure~\ref{fig:unreliability}. We consider two churn
models for our analysis: (a) nodes have a mean lifetime of 24 hours
(reflecting behavior of Tor relays~\cite{torstatus-blutmagie}
), and (b) nodes have a mean lifetime of 1 hour (reflecting
behavior of conventional P2P networks). For the two scenarios, using a
time slot duration of 3 hours and one of 5 minutes, respectively,
results in a 2-3\% probability of getting an unreliable random walk for
up to 25 hops.
\begin{figure}[tp]
\centering
\includegraphics[width=2.5in]{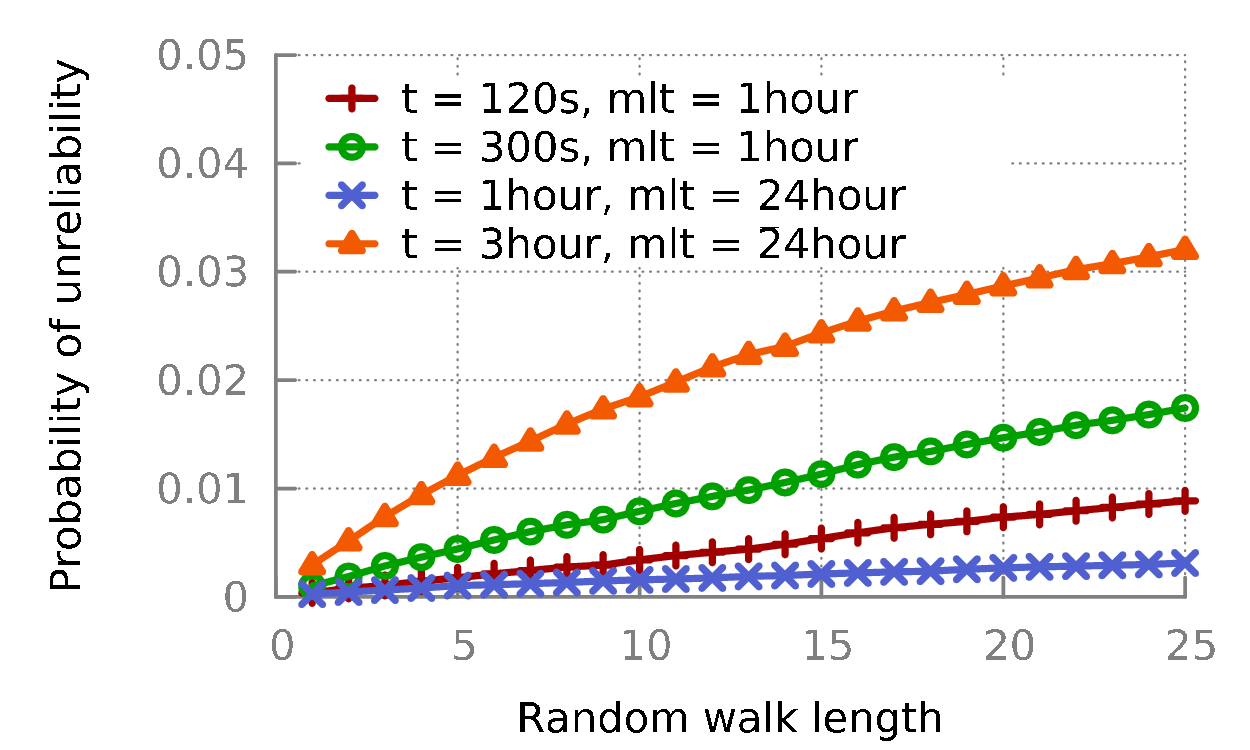}\\
\begin{ccs}
\vspace{-10pt}
\end{ccs}
\caption{Unreliability in circuit construction [Facebook wall post
    interaction graph].}
\label{fig:unreliability}
\begin{ccs}
\vspace{-10pt}
\end{ccs}
\end{figure}

{\bf Overhead:}
There are three main sources of communication overhead in our
system. First is the overhead due to setting up the neighbor lists;
which requires about $d^2$\ KB of communication, where $d$ is the node
degree.
The second source of overhead is the testing random walks, where nodes
are required to perform about six such walks of length 25. %
The third source of overhead comes from participation in the Whanau
DHT. Typically, key churn is a significant source of overhead in Whanau,
requiring all of its routing tables to be rebuilt. However, in our
scenario, only the values corresponding to the keys change quickly, but
not the keys themselves, requiring only a modest amount of heartbeat
traffic~\cite{whanau:nsdi10}. Considering the Facebook wall post
topology, we estimate the mean communication overhead per node \emph{per 
time interval} to be only about 6\ MB.%
\begin{figure}[tp]
\centering
\includegraphics[width=2.5in]{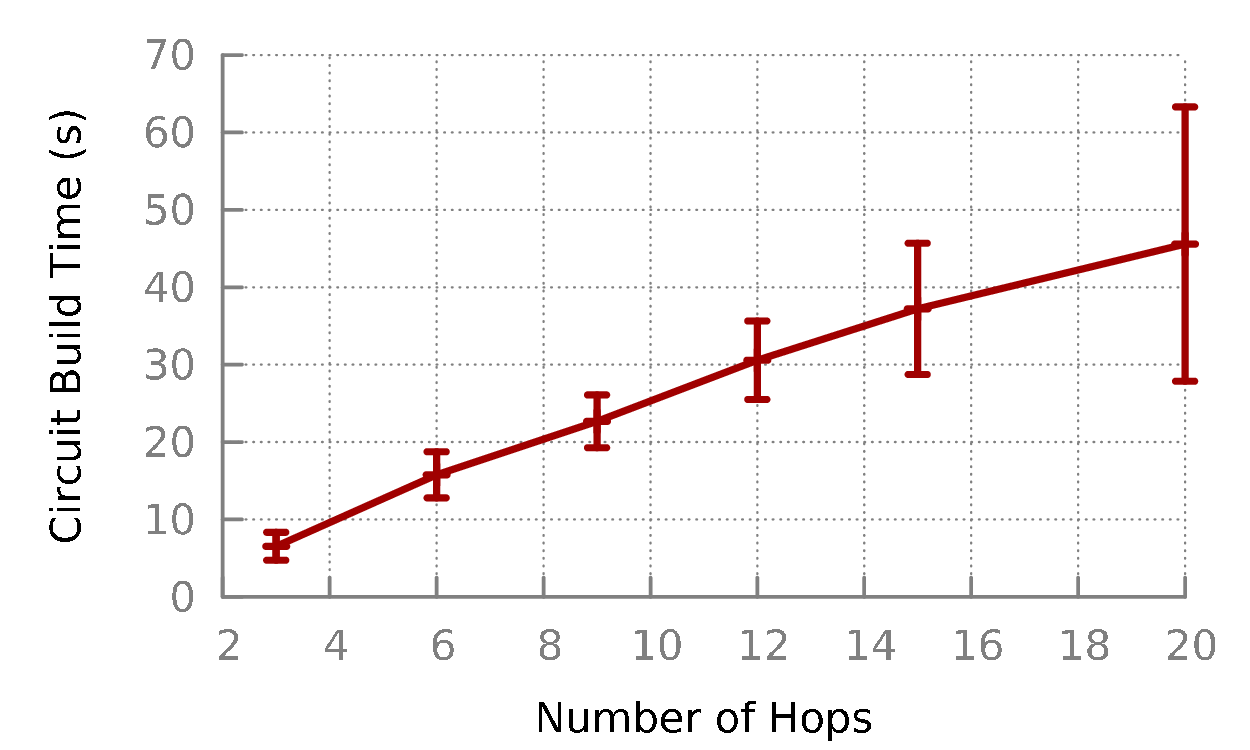}\\
\begin{ccs}
\vspace{-10pt}
\end{ccs}
\caption{Circuit build times in Tor as a function of circuit length}
\label{fig:build-times}
\begin{ccs}
\vspace{-10pt}
\end{ccs}
\end{figure}
We also evaluate the latency of constructing long onion routing circuits
through experiments over the live Tor network. We used the Torflow
utility to build Tor circuits with varying circuit lengths;
Figure~\ref{fig:build-times} depicts our experimental results. Using
these results, we estimate that 25 hop circuits would take about 1
minute to construct.

\subsection{Anonymity}

Earlier, we considered the probability of end-to-end timing analysis as
our metric for anonymity. This metric considers the scenario where the
adversary has exactly de-anonymized the user. However, in random walk
based anonymous communication, the adversary may sometimes have
probabilistic knowledge of the initiator. To quantify all possible
sources of information leaks, we now use the entropy metric to quantify
anonymity~\cite{diaz:pets02,serjantov:pets02}. The entropy metric
considers the \emph{probability distribution} of nodes being possible
initiators, as computed by the attackers. In this paper, we will
restrict our analysis to Shannon entropy, since it is the most widely
used mechanism for analyzing anonymity. There are other ways of
computing entropy, such as guessing entropy~\cite{guessing-entropy} and
min entropy~\cite{min-entropy}, which we will consider in the full
version of this work. Shannon entropy is computed as:
\begin{align}
H(I) & = \sum_{i=0}^{i=n} -p_i \cdot \log_2(p_i)
\end{align}
where $p_i$ is the probability assigned to node $i$ of being the
initiator of a circuit. Given a particular observation $o$, the
adversary can first compute the probability distribution of nodes being
potential initiators of circuits $p_i | o$ and then the corresponding
conditional entropy $H(I|o)$. We can model the entropy distribution of
the system as a whole by considering the weighted average of entropy for
each possible observation, including the null observation.
\begin{align}
H(I|O) = \sum_{o \in O} P(o) \cdot H(I|o)
\label{eqn:system-entropy}
\end{align}
We first consider the scenario where an initiator performs an
$l$-hop random walk to communicate with a malicious destination, and the
nodes in the random walk are all honest, i.e., the adversary is external
to the system. For this scenario, we will analyze the expected initiator
anonymity under the conservative assumption that the adversary has
complete knowledge of the entire social network graph.

{\bf Malicious destination:}
\begin{figure*}[ht]
\centering
\mbox{
\hspace{-0.2in}
\hspace{-0.12in}
\begin{tabular}{c}
\psfig{figure=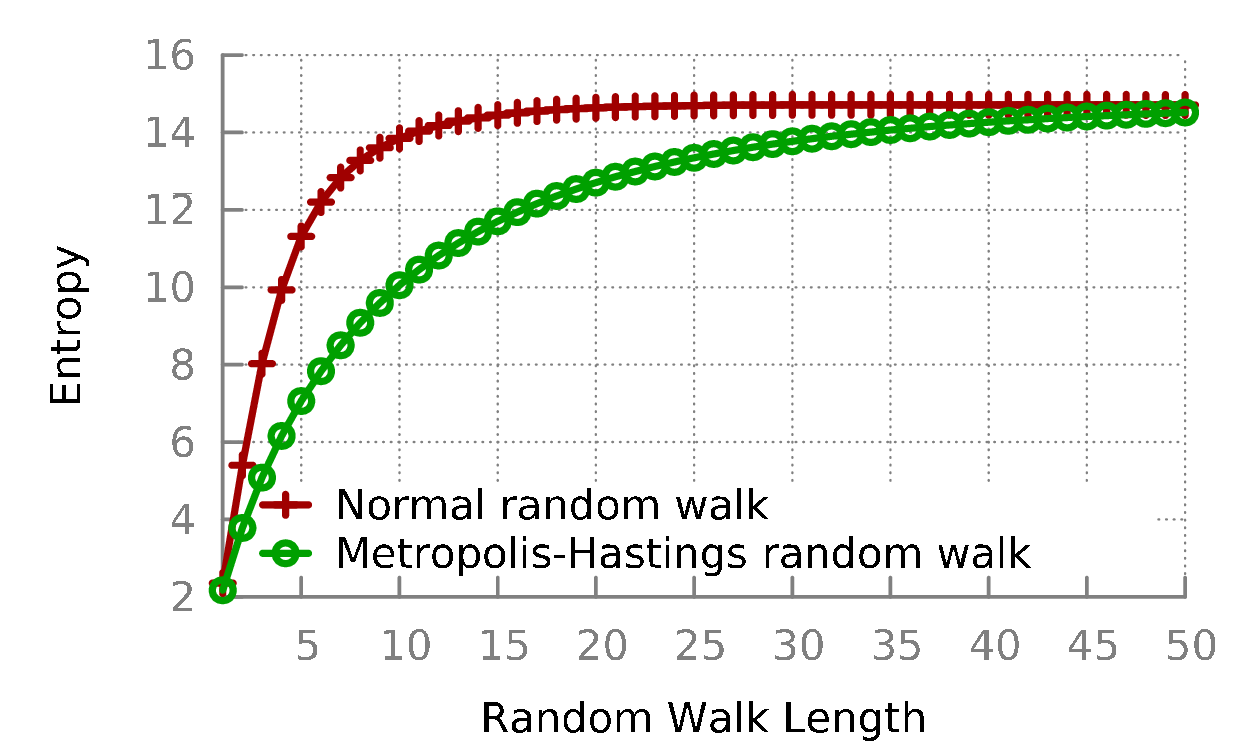,width=0.33 \textwidth}\vspace{-0.00in}\\
{(a)}
\end{tabular}
\begin{tabular}{c}
\psfig{figure=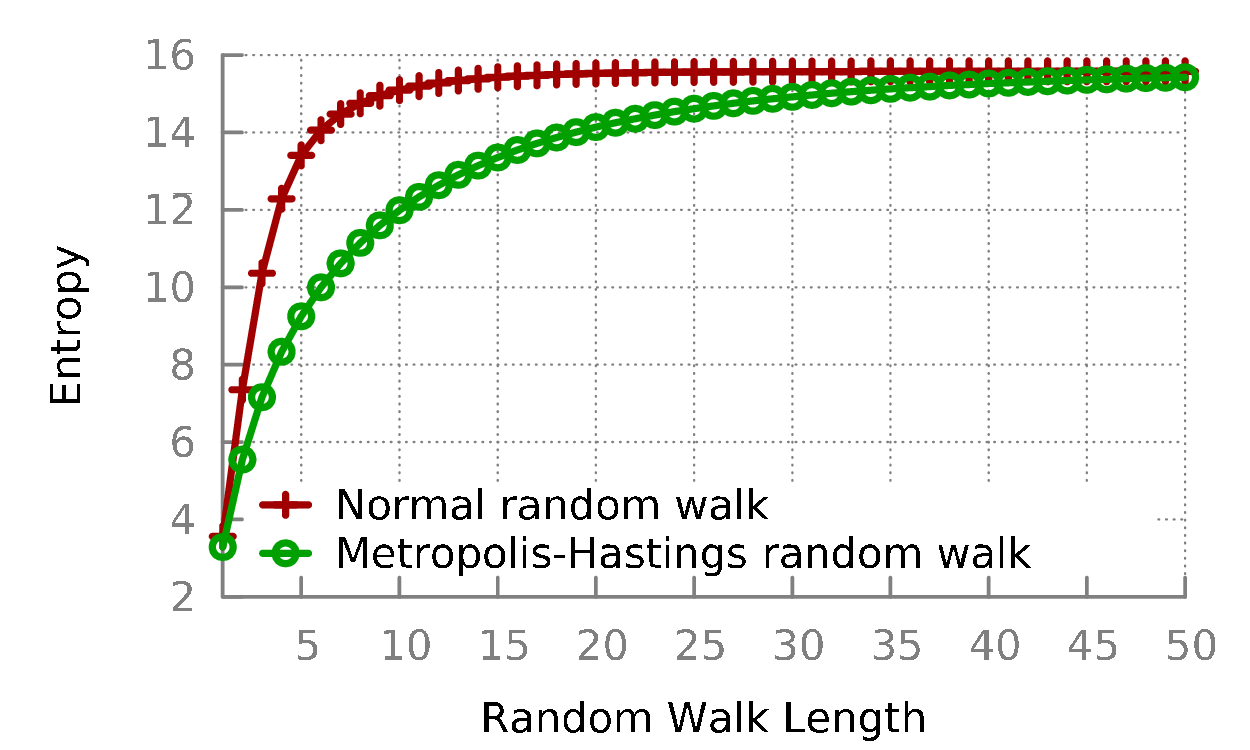,width=0.33 \textwidth}\vspace{-0.00in}\\
{(b)}
\end{tabular}
\hspace{-0.2in}
\begin{tabular}{c}
\psfig{figure=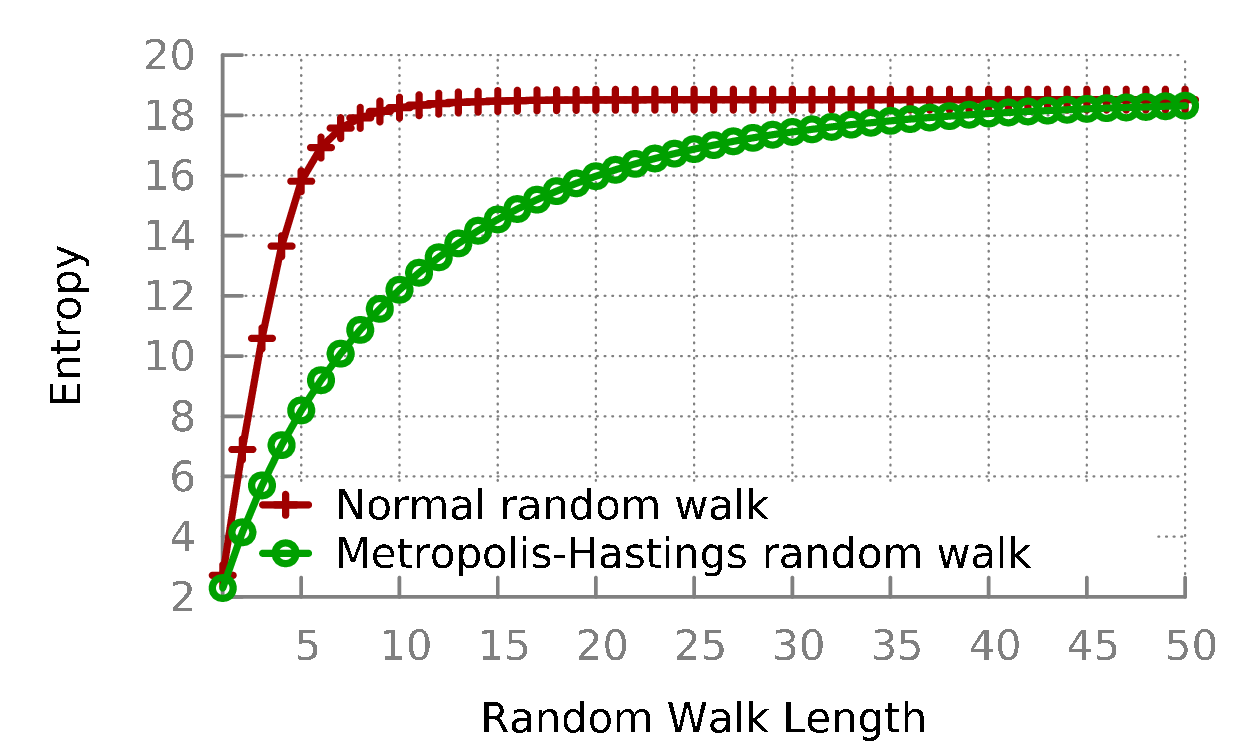,width=0.33\textwidth}\vspace{-0.00in}\\
{(c)}
\end{tabular}
}
\mbox{
\hspace{-0.2in}
\hspace{-0.12in}
\begin{tabular}{c}
\psfig{figure=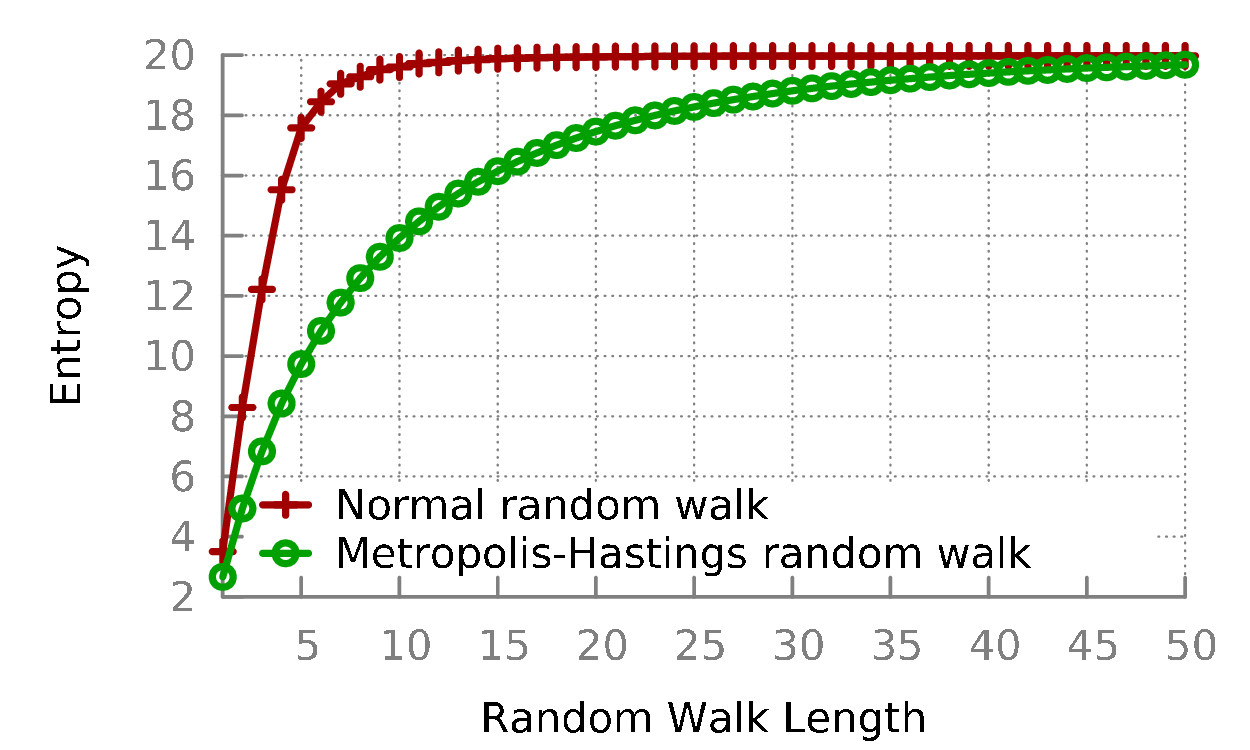,width=0.33 \textwidth}\vspace{-0.00in}\\
{(d)}
\end{tabular}
\begin{tabular}{c}
\psfig{figure=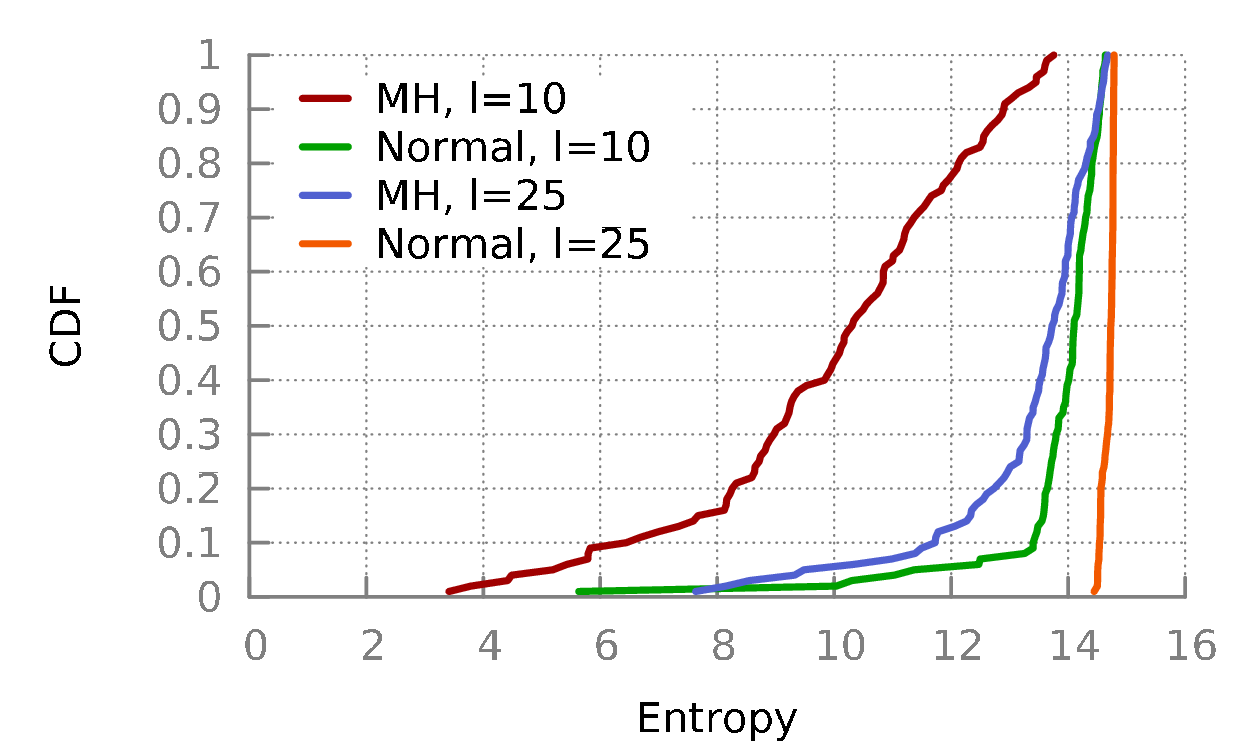,width=0.33 \textwidth}\vspace{-0.00in}\\
{(e)}
\end{tabular}
}
\begin{ccs}
\vspace{-5pt}
\end{ccs}
\caption{{\em Expected entropy as a function of random walk length using
    (a) Facebook wall post graph (b) Facebook link graph (c) Anonymous
    Interaction graph (d) Anonymous link graph and (e) CDF of entropy
    for Facebook wall graph and malicious destination.}  %
  }
\label{fig:entropy}
\begin{ccs}
\vspace{-10pt}
\end{ccs}
\end{figure*}
A naive way to compute initiator entropy for this scenario is to
consider the set of nodes that are reachable from the terminus of the
random walk in exactly $l$ hops (the adversary's observation), and
assign a uniform probability to all nodes in that set of being potential
initiators. However, such an approach does not consider the mixing
characteristics of the random walk; $l$ hop random walks starting at
different nodes may in fact have heterogeneous probabilities of
terminating at a given node.

We now outline a strategy that explicitly considers the mixing
characteristics of the trust topology. Let the terminus of an $l$ hop
random walk be node $j$. The goal of the adversary is to compute
probabilities $p_i$ of a particular node being the initiator of the
circuit:
\begin{align}
p_i & = \frac{P^l_{ij}}{\sum_x P^l_{xj}}
\label{eqn:initiator-prob}
\end{align}
where $P^l$ denotes the $l$-hop transition matrix for the random walk
process. Note that even for moderate size social graphs, the explicit
computation of $P^l$ is infeasible in terms of both memory and
computational constraints. This is because even though $P$ is a sparse
matrix, iterative multiplication of $P$ by itself results in a matrix
that is no longer sparse. To make the problem feasible, we propose to
leverage the \emph{time reversibility} of our random walk process. We
have previously modeled the random walk process as a Markov
chain (Theorem 2 in Appendix). Markov chains that satisfy the following property are known as
time-reversible Markov chains~\cite{aldous-reversible}.
\begin{align}
\pi_i \cdot P_{ij} = \pi_j \cdot P_{ji}
\end{align}
Both the conventional random walk and the Metropolis-Hastings random
walk satisfy the above property and are thus time-reversible Markov
chains. It follows from time reversibility~\cite{aldous-reversible},
that:
\begin{align}
\pi_i \cdot P^l_{ij} = \pi_j \cdot P^l_{ji} \implies P^l_{ij} = \frac{\pi_j}{\pi_i} \cdot P^l_{ji} 
\end{align}
Thus it is possible to compute $P^l_{ij}$ using $P^l_{ji}$. Let $V_j$ be
the initial probability vector starting at node $j$. Then the
probability of an $l$ hop random walk starting at node $j$ and ending at
node $i$ can be computed as the i'th element of the vector $V_j \cdot
P^l$. Observe that $V_j \cdot P^l$ can be computed without computing the
matrix $P^l$:
\begin{align}
V_j \cdot P^ l & = (V_j \cdot P) \cdot P^{l-1}
\end{align}
Since $P$ is a sparse matrix, $V_j \cdot P$ can be computed in $O(n)$
time, and $V_j \cdot P^l$ can be computed in $O(nl)$ time. Finally, we
can compute the probabilities of nodes being potential initiators of
circuits using equation~\eqref{eqn:initiator-prob}, and the
corresponding entropy gives us the initiator anonymity. 
\prateekccs{We average the
resulting entropy over 100 randomly chosen %
terminal nodes $j$ to compute the expected anonymity.}

Figure~\ref{fig:entropy}(a)-(d) depicts the expected initiator anonymity
as a function of random walk length for different social network
topologies. We can see that longer random walks result in an increase in
anonymity. This is because for short random walks of length $l$ in
restricted topologies such as trust networks, not every node can reach
the terminus of the random walk in $l$ hops. Secondly, even for nodes
that can reach the terminus of the random walk in $l$ hops, the
probabilities of such a scenario happening can be highly
heterogeneous. Further more, we can see that conventional random walks
converge to optimal entropy in about 10 hops for all four topologies. In
contrast, the Metropolis-Hastings random walks used in Pisces take
longer to converge. This is because random walks in Pisces have slower
mixing properties than conventional random walks. However, we can see
that even the Metropolis-Hastings random walk starts to converge after
25 hops in all scenarios.  

To get an understanding of the distribution of the entropy, we plot the
CDF of entropy over 100 random walk samples in
Figure~\ref{fig:entropy}(e). We can see that the typical anonymity
offered by moderately long random walks is %
high. For example,
using a Metropolis-Hastings random walk of length $25$, 95\% of users
get an entropy of at least $11$ bits.
So far, we observed %
that Metropolis-Hastings random walks need to be longer than
conventional random walks for equivalent level of anonymity against a
malicious destination. Next, we will see the benefit of using
Metropolis-Hastings random walks in Pisces, \prateekccs{since %
they can be
secured against insider attacks.%
}

{\bf Insider adversary:}
\begin{figure*}[htp]
\centering
\mbox{
\hspace{-0.2in}
\hspace{-0.12in}
\begin{tabular}{c}
\psfig{figure=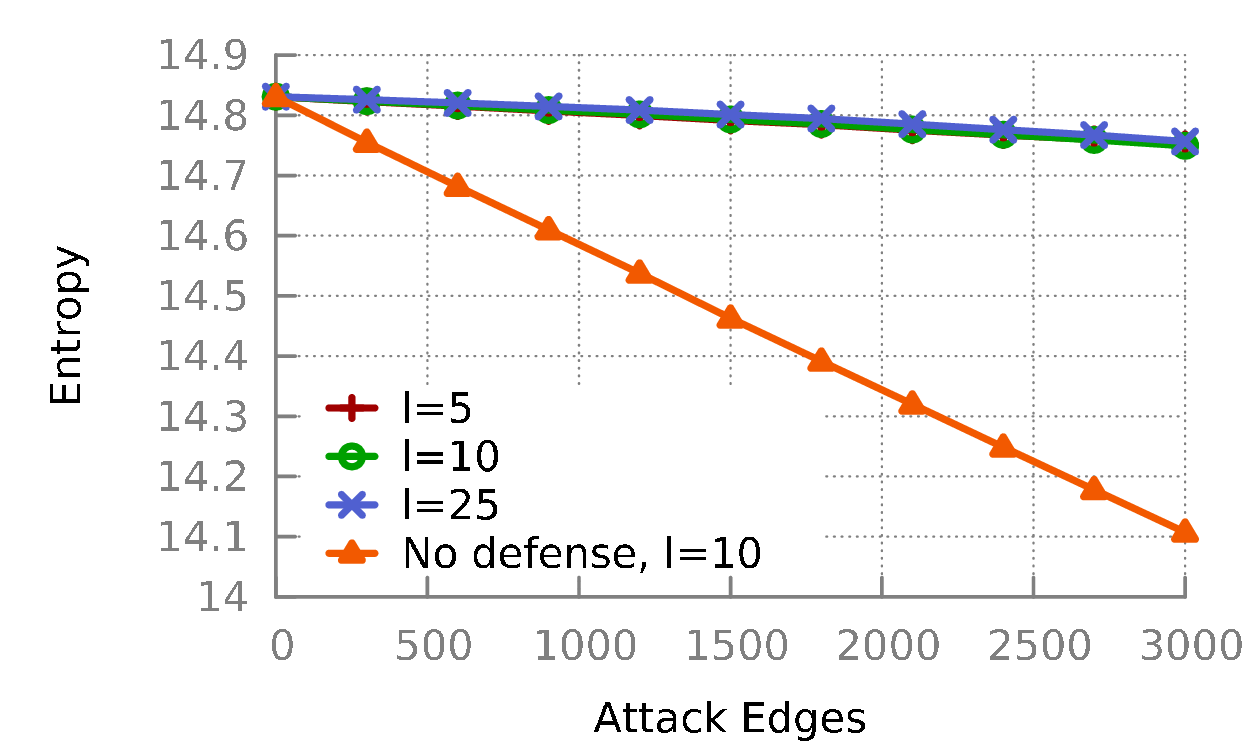,width=0.33 \textwidth}\vspace{-0.00in}\\
{(a)}
\end{tabular}
\begin{tabular}{c}
\psfig{figure=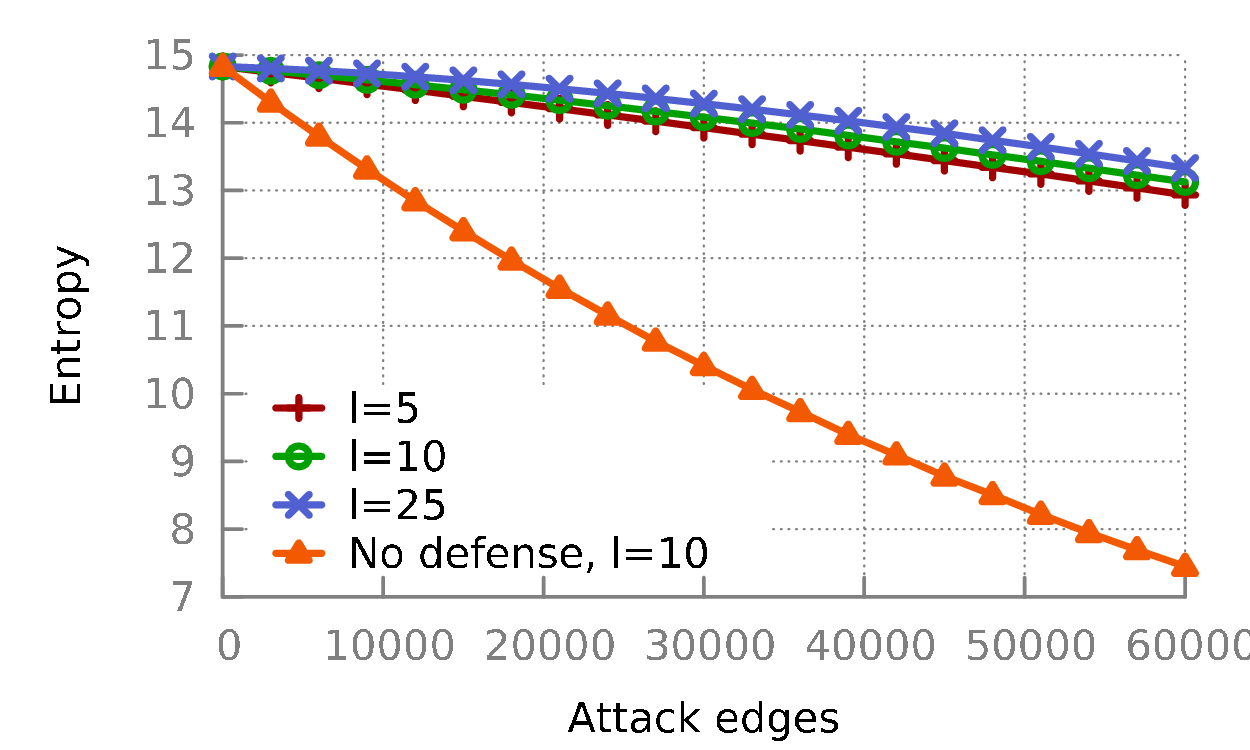,width=0.33 \textwidth}\vspace{-0.00in}\\
{(b)}
\end{tabular}
}
\begin{ccs}
\vspace{-5pt}
\end{ccs}
\caption{{\em Entropy as a function of fraction of attack edges using}
  (a) realistic model of an imperfect Sybil defense (10 Sybils per
  attack edge) and (b) perfect Sybil defense for Facebook wall post
  interaction graph. Note that the "No Defense" strategy models the random walks 
 used in systems such as Drac and Whanau. }
\label{fig:entropy-malicious}
\begin{ccs}
\vspace{-10pt}
\end{ccs}
\end{figure*}
\begin{ccs}
\begin{figure*}[htp]
\centering
\mbox{
\hspace{-0.2in}
\hspace{-0.12in}
\begin{tabular}{c}
\psfig{figure=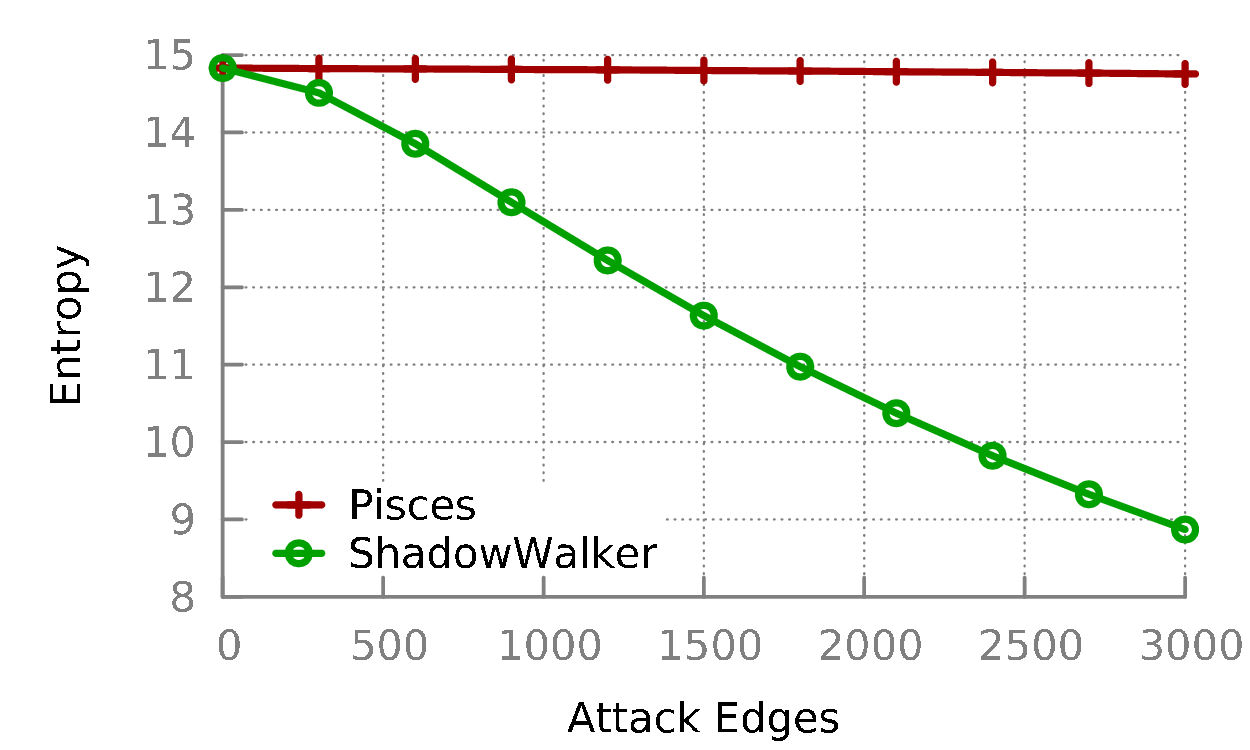,width=0.33 \textwidth}\vspace{-0.00in}\\
{(a)}
\end{tabular}
\begin{tabular}{c}
\psfig{figure=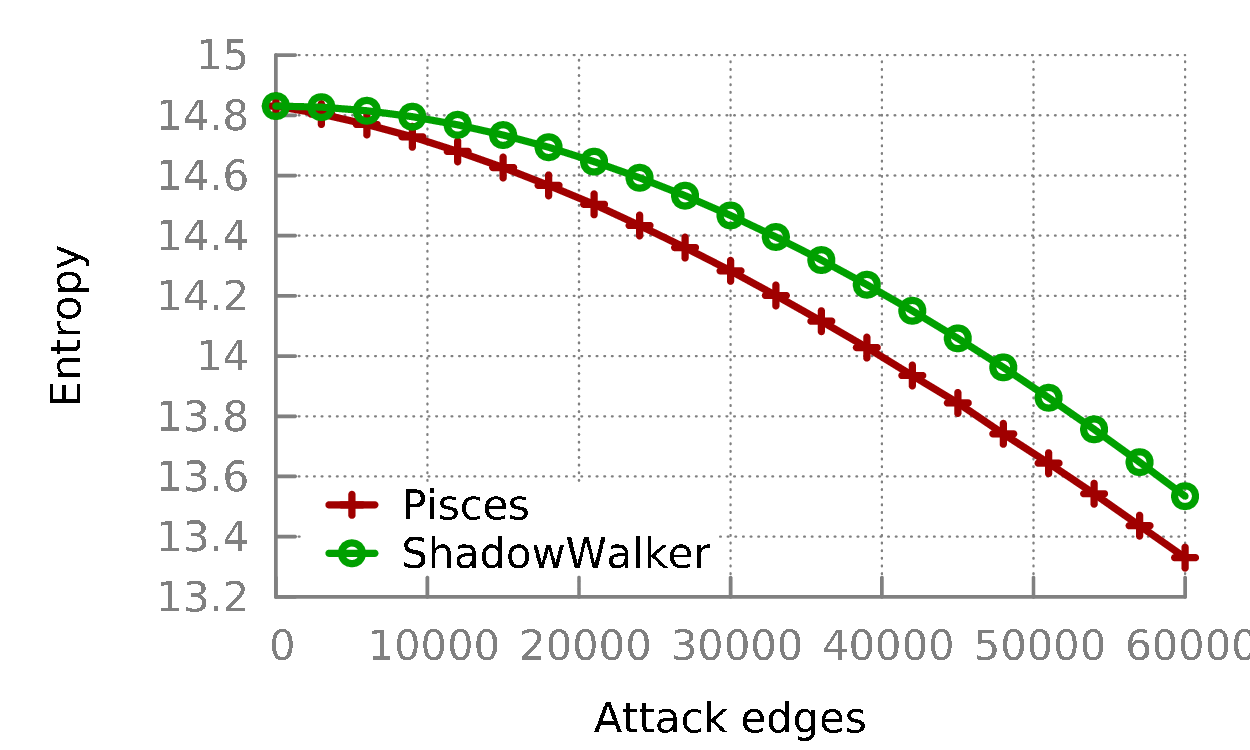,width=0.33 \textwidth}\vspace{-0.00in}\\
{(b)}
\end{tabular}
}
\begin{ccs}
\vspace{-5pt}
\end{ccs}
\caption{{\em Comparison with ShadowWalker. Entropy as a function of
    fraction of attack edges using} (a) realistic model of an imperfect
  Sybil defense (10 Sybils per attack edge) and (b) perfect Sybil
  defense for Facebook wall post interaction graph.  }
\label{fig:entropy-shadowwalker}
\vspace{-10pt}
\end{figure*}
\end{ccs}
\begin{techreport}
\begin{figure*}[thp]
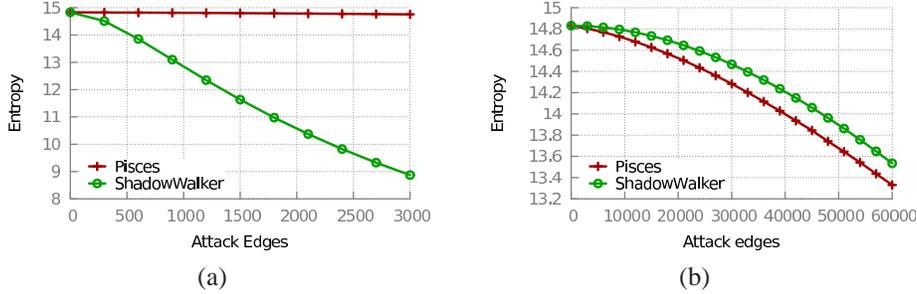

\centering
\mbox{
\hspace{-0.2in}
\hspace{-0.12in}
\begin{tabular}{c}
\psfig{figure=anonymity/anonymity-shadowwalker-sybil.ps,width=0.33 \textwidth}\vspace{-0.00in}\\
{(a)}
\end{tabular}
\begin{tabular}{c}
\psfig{figure=anonymity/anonymity-shadowwalker-random.ps,width=0.33 \textwidth}\vspace{-0.00in}\\
{(b)}
\end{tabular}
}
\begin{ccs}
\vspace{-5pt}
\end{ccs}
\caption{{\em Comparison with ShadowWalker. Entropy as a function of
    fraction of attack edges using} (a) realistic model of an imperfect
  Sybil defense (10 Sybils per attack edge) and (b) perfect Sybil
  defense for Facebook wall post interaction graph.  }
\label{fig:entropy-shadowwalker}
\end{figure*}
\end{techreport}
We now analyze the anonymity of the system with respect to an
insider adversary (malicious participants).%
We first
consider an adversary that has $g$ attack edges going to honest nodes,
with $g=O(\frac{h}{\log h})$, and 10 Sybils per attack
edge~\cite{sybillimit}. When both the first and the last hop of a random
walk are compromised, then initiator entropy is 0 due to end-to-end
timing analysis. Let $M_i$ be the event where the first compromised node
is at the $i$th hop and the last hop is also compromised. Suppose that
the previous hop of the first compromised node is node $A$. Under this
scenario, the adversary can localize the initiator to the set of nodes
that can reach the node $A$ in $i-1$ hops.  If we denote the initiator
anonymity under this scenario as $H(I|M_i)$, then from
equation~\eqref{eqn:system-entropy}, it follows that the overall system
anonymity is:
\begin{align}
H(I|O) = \sum_{i=1}^{i=l} P(M_i) \cdot H(I|M_i) + (1-\sum_{i=1}^{i=l} P(M_i)) \cdot \log_2 n
\end{align}
We compute $P(M_i)$ using simulations, and $H(I|M_i)$, using the
expected anonymity computations discussed above.
Figure~\ref{fig:entropy-malicious}(a) depicts the expected entropy as a
function of the number of attack edges. We find that Pisces provides
close to optimal anonymity. Moreover, as the length of the random walk
increases, the anonymity does not degrade. In contrast, without any
defense, the anonymity decreases with an increase in the random walk
length (not shown in the figure), since at every step in the random
walk, there is a chance of the random walk being captured by the
adversary. At $g=3000$, the anonymity provided by a conventional 10-hop
random walk without any defenses (used in systems such as Drac and Whanau) is $14.1$ bits, while Pisces provides
close to optimal anonymity at $14.76$ bit. For uniform probability
distributions, this represents an increase in the size of the anonymity
set by a factor of 1.6.
It is also interesting to see that the advantage of using Pisces
increases as the number of attack edge increases.
To further investigate this, we consider the attack model with perfect
Sybil defense and vary the number of attack edges.
Figure~\ref{fig:entropy-malicious}(b) depicts the anonymity as a
function of the number of attack edges. We can see that at 60\,000
attack edges, the expected anonymity without defenses is $7.5$ bits, as
compared to more than $13$ bits with Pisces (anonymity set size
increases by a factor of 45).

{\bf Comparison with ShadowWalker:} ShadowWalker~\cite{shadowwalker} is
a state-of-the-art approach for scalable anonymous communication that
organizes nodes into a structured topology such as Chord
and performs secure random walks on such topologies. We now compare our
approach with ShadowWalker.
To compute the anonymity provided by ShadowWalker, we use the fraction
$f$ of malicious nodes in the system as an input to the analytic model
of ShadowWalker~\cite{shadowwalker}, and use ShadowWalker parameters
that provide maximum security. Figure~\ref{fig:entropy-shadowwalker}(a)
depicts the comparative results between Pisces (using $l=25$) and
ShadowWalker. We can see that Pisces significantly outperforms
ShadowWalker. At $g=1000$ attack edges, Pisces provides about two bits
higher entropy than ShadowWalker, and this difference increases to six 
bits at $g=3000$ attack edges\footnote{At such high attack edges,
  ShadowWalker may even have difficulty in securely maintaining its
  topology, which could further lower anonymity.}.  This difference
arises because Pisces directly performs random walks on the social
network topology, limiting the impact of Sybil attackers, while
ShadowWalker is designed to secure random walks only on structured
topologies. Arranging nodes in a structured topology loses information
about trust relationships between users, resulting in poor anonymity for
ShadowWalker. \footnote{This observation is also applicable to Tor. Pisces 
provides 5 bits higher entropy than Tor at $g=3000$ attack edges.}
For comparison, we also consider the attack model with perfect Sybil
defense and vary the number of attack
edges. Figure~\ref{fig:entropy-shadowwalker}(b) depicts the results for
this scenario. We can see that even in this scenario where trust
relationships lose meaning since the adversary is randomly distributed,
Pisces continues to provides comparable anonymity to
ShadowWalker. Pisces's entropy is slightly lower since social networks
are slower mixing than structured networks, requiring longer length
random walks than ShadowWalker and thereby giving more observation
points to the adversary.

{\bf Performance optimization:} 
We now analyze the anonymity provided by our two hop optimization, which
uses the $k$-th hop and the last hop of the random walk for anonymous
communication. To analyze anonymity in this scenario, let us redefine
$M_i$ ($i\neq k$) as the event when the first compromised node is at the
$i$-th hop, the last node is also compromised, but the $k$-th node is
honest. Let $M_k$ be the event where the $k$-th hop and the last hop are
compromised (regardless of whether other nodes are compromised or not)
and the definition of $M_l$ remains the same as before, i.e., only the
last hop is compromised. We can compute system anonymity as:
\begin{align}
H(I|O)  & = \sum_{i=1}^{i=k-1} P(M_i) \cdot H(I|M_i) + \sum_{i=k+1}^{l} P(M_i) \cdot H(I|M_k) \nonumber \\ 
   & + (1-\sum_{i=1}^{i=l} P(M_i)) \cdot \log_2 n
\end{align}
\begin{ccs}
\begin{figure}[tph]
\centering
\includegraphics[width=2.5in]{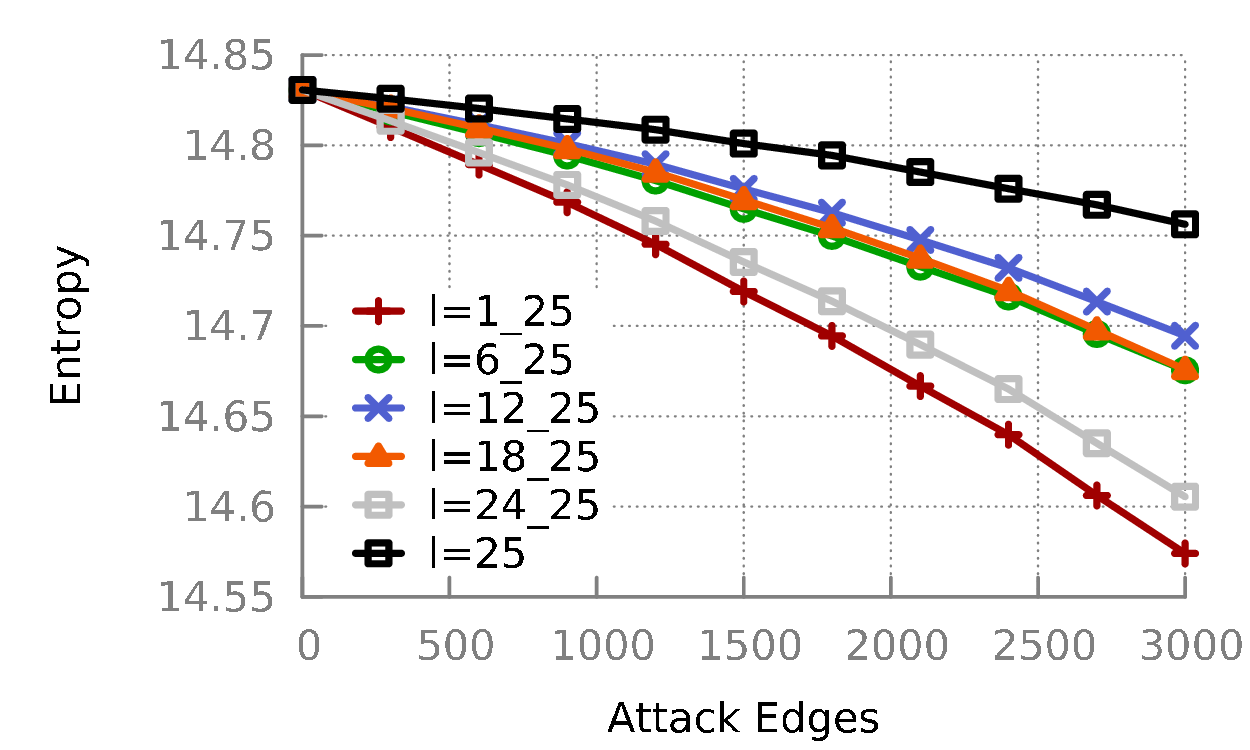}\\
\begin{ccs}
\vspace{-10pt}
\end{ccs}
\caption{Anonymity using the two hop performance optimization, Facebook
  wall graph, 10 Sybils/attack edge. $k=12$ results in provides a good
  trade-off between anonymity and performance.}
\label{fig:two-hop}
\vspace{-10pt}
\end{figure}
\end{ccs}
Figure~\ref{fig:two-hop} depicts the anonymity for our two hop
optimization for different choices of $k$. We see an interesting
trade-off here. Small values of $k$ are not optimal, since even though
the first hop is more likely to be honest, when the last hop is
compromised, then the initiator is easily localized. On the other hand,
large values of $k$ are also not optimal, since these nodes are far away
from the initiator in the trust graph and are less trusted.  We find
that optimal trade-off points are in the middle, with $k=12$ providing
the best anonymity for our optimization. We also note that the anonymity
provided by our two hop optimization is close to the anonymity provided
by using all 25 hops of the random walk.

{\bf Selective denial of service:}
\label{sec:seldos}
\begin{techreport}
\begin{figure}[tph]
\centering
\includegraphics[width=2.5in]{anonymity/anonymity-metropolis-sybil-two-hop.ps}\\
\caption{Anonymity using the two hop performance optimization, Facebook
  wall graph, 10 Sybils/attack edge. $k=12$ provides a good trade-off between
  anonymity and performance.}
\label{fig:two-hop}
\end{figure}
\end{techreport}
\begin{figure}[tph]
\centering
\includegraphics[width=2.5in]{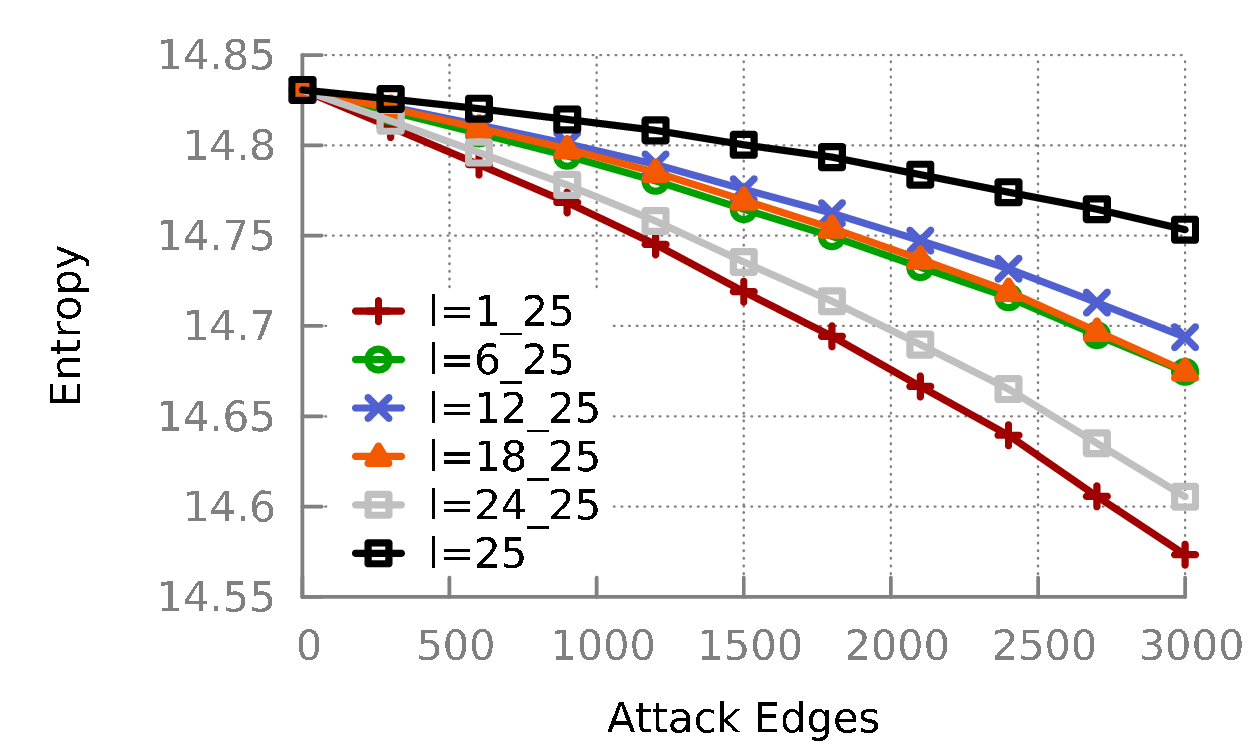}\\
\caption{Anonymity under the Selective DoS attack using the Facebook
  wall graph, 10 Sybils/attack edge. Selective DoS has limited impact.}
\label{fig:selective-dos}
\end{figure}
Next, we evaluate Pisces anonymity against the selective DoS
attack~\cite{borisov:ccs07}. In this attack, an adversary can cause a
circuit to selectively fail whenever he or she is unable to learn the
initiator identity. This forces the user to construct another circuit,
which results in a degradation of anonymity.
We found that the degradation in initiator anonymity under this attack 
is less than 1\%. The reason why Pisces is less vulnerable to selective DoS 
as compared with other systems such as Tor is due to the use of social trust. With high 
probability, most random walks traverse only the honest set of nodes. 
This result is illustrated in Figure~\ref{fig:selective-dos}. %
{\bf Multiple communication rounds:} 
So far, we had limited our analysis to a single communication
round. Next, we analyze system anonymity over multiple communication
rounds.
Let us suppose that in communication rounds $1 \ldots z$, the
adversary's observations are $O_1 \ldots O_z$. Let us denote a given
node's probability of being the initiator after $z$ communication rounds
by $P(I=i|O_1,\ldots ,O_z)$. Now, after communication round $z+1$, we
are interested in computing the probability $P(I=i|O_1,\ldots
,O_{z+1})$. Using Bayes's theorem, we have that:
\begin{align}
P(I=i|O_1, \ldots , O_{z+1}) & = \frac{P(O_1, \ldots , O_{z+1}| I=i) \cdot P(I=i)}{P(O_1, \ldots , O_{z+1})}
\end{align}
The key advantage of this formulation is that we can now leverage the
observations $O_1, \ldots O_{z+1}$ being independent given a choice of
initiator. Thus we have that:
\begin{align}
P(I=i|O_1, & \ldots , O_{z+1})  = \frac{ \Pi_{j=1}^{j=z+1} P(O_j| I=i) \cdot P(I=i)}{P(O_1, \ldots , O_{z+1})} \nonumber \\
  & =   \frac{ \Pi_{j=1}^{j=z+1} P(O_j| I=i) \cdot P(I=i)}{ \sum_{p=1}^{p=h} P(O_1, \ldots , O_{z+1} | I=p) \cdot P(I=p)} \nonumber \\
  & = \frac{ \Pi_{j=1}^{j=z+1} P(O_j| I=i) \cdot P(I=i)}{ \sum_{p=1}^{p=h} \Pi_{j=1}^{j=z+1} P(O_j| I=p) \cdot P(I=p)} 
\end{align}
Finally, assuming a uniform prior over all possible initiators, we have
that:
\begin{align}
P(I=i|O_1, \ldots , O_{z+1}) & = \frac{ \Pi_{j=1}^{j=z+1} P(O_j| I=i)}{ \sum_{p=1}^{p=h} \Pi_{j=1}^{j=z+1} P(O_j| I=p)} 
\end{align}
\begin{figure}[t]
\centering
\includegraphics[width=2.5in]{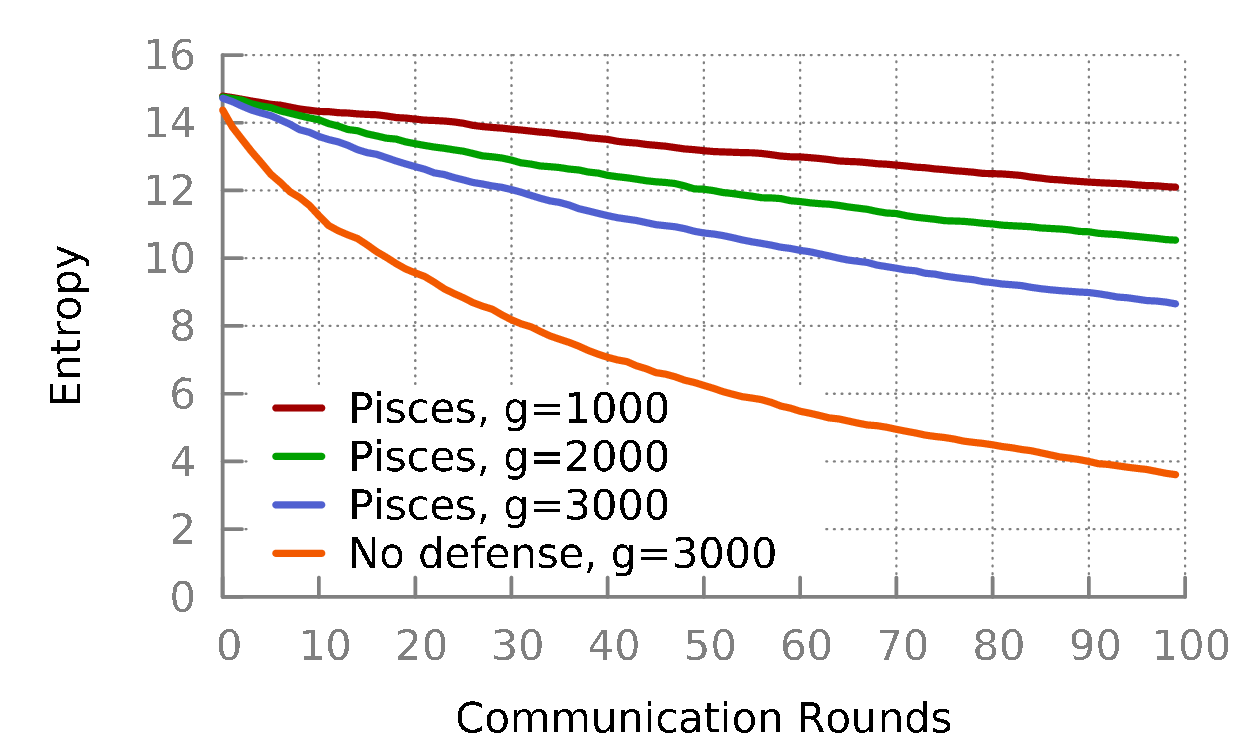}\\
\begin{ccs}
\vspace{-10pt}
\end{ccs}
\caption{Anonymity degradation over multiple communication rounds,
  Facebook wall graph, 10 Sybils/attack edge}
\label{fig:multi-round}
\begin{ccs}
\vspace{-10pt}
\end{ccs}
\end{figure}
Figure~\ref{fig:multi-round} depicts the expected anonymity as a
function of number of communication rounds. We can see that the entropy
provided by Pisces outperforms conventional random walks by more than a
factor of two (in bits) after 100 communication rounds (the anonymity 
set size is increased by a factor of 16).
\begin{ccs}
\section{Limitations and Future Work}
\end{ccs}

\begin{techreport}
\section{Discussion}
{\em Integration with Sybil defenses:} We outline two strategies 
for integrating Pisces with Sybil defense. The first approach 
is to leverage mechanisms that require the whole social graph as 
input for Sybil defense, such as SybilInfer. This has the downside 
of communication overhead for reliably maintaining the 
social graph accurately, in presence of social graph churn such 
as new users and new trust relationships. The upside of this approach 
is that \emph{while} performing random walks, both for anonymous 
communication and for testing, no further communication is 
required to validate identities for Sybil defense. The second 
approach is to leverage decentralized mechanisms like SybilLimit 
that do not require users to maintain global information about the 
social graph. However, in this scenario, while performing random walks, 
each hop of the random walk must be validated for Sybil defense. A key 
challenge in validating nodes while performing random walks is to prevent 
other entities in the network from learning the nodes involved in the 
random walk performed by an initiator. Towards this end, we propose that 
the node being validated return its list of SybilLimit \emph{tails} to 
the initiator using the partial onion routing circuit, who can then 
perform a privacy preserving set intersection protocol with its tails to 
perform Sybil defense.  
\end{techreport}

\begin{techreport}
{\em Limitations:} 
\end{techreport}

While Pisces is the first decentralized design that can both scalably
leverage social network trust relationships and mitigate route
capture attacks, its architecture has some limitations. First, Pisces
requires user's social contacts to participate in the system. To improve 
the usability of the system, in future
work, we will investigate the feasibility of leveraging a user's two-hop
social neighborhood in the random walk process. %
Pisces also does not preserve the privacy of users' social contacts. 
We emphasize that this is also a limitation shared by a large class of social 
network based systems, including Sybil defense mechanisms such as SybilLimit, 
SybilInfer and Whanau. Any distributed anonymity system relying on such 
Sybil defense mechanisms cannot preserve the privacy of users' social 
contacts. 

Second, users who are not well connected in the social
network topology may not benefit from using Pisces. This is because
random walks starting from those nodes may take a very long time to
converge to the stationary probability distribution (which provides
optimal anonymity). 

Third, Pisces does not defend against targeted
attacks on an individual, in which the adversary aims to massively
infiltrate or compromise the user's social circle for increasing the
probability of circuit compromise. We note that the impact of such an
attack is localized to the targeted individual. 

Fourth, circuit
establishment in Pisces has higher latency than existing systems, since
random walks in Pisces tend to be longer. However, we note that
circuits can be established pre-emptively, such that this latency does
not affect the user. In fact, deployed systems such as Tor already build
circuits pre-emptively. 

Finally, Pisces currently does not support
important constraints such as bandwidth-based load balancing and exit
policies. The focus of our architecture was to secure the peer discovery
process in unstructured social network topologies, and we will consider
the incorporation of these constraints in future work.

\section{Conclusion}
\label{sec:conclusion}

In this paper, we propose a mechanism for decentralized anonymous
communication that can securely leverage a user's trust relationships
against a Byzantine adversary. Our key contribution is to show that
appearance of nodes in each other's neighbor lists can be made
reciprocal in a secure and efficient manner. Using theoretical analysis
and experiments on real world social network topologies, we demonstrate
that Pisces substantially reduces the probability of active attacks on
circuit constructions.
We find that Pisces significantly outperforms approaches that do not
leverage trust relationships, and provides up to six bits higher entropy 
than ShadowWalker (5 bits higher entropy than Tor) in a single communication
round. Also, compared with the naive strategy of using conventional
random walks over social networks (as in the Drac system), Pisces
provides twice the number of bits of entropy over 100 communication
rounds. In conclusion, we argue that the incorporation of social trust
will likely be an important consideration in the design of the next
generation of deployed anonymity systems.  \balance

\section*{Acknowledgment}
We would like to thank the attendees of HotPETs 2010 and USENIX HotSec
2010 for helpful comments. In particular, this work benefited from
conversations with George Danezis, Aaron Johnson, and Paul
Syverson. This work is sponsored in part by NSF CAREER Award, number
CNS-0954133, and by award number CNS-1117866. Any opinions, findings and
conclusions or recommendations expressed in this material are those of
the author(s) and do not necessarily reflect those of the National
Science Foundation.

{
\bibliographystyle{abbrv}
\bibliography{refs,scaling-anonymity-cosic}
}


\end{document}